\newtheorem{defn}{Definition}
\newtheorem{theorem}{Theorem}
\newtheorem{prop}{Proposition}
\newenvironment{eq}{\vspace{-1em}\begin{equation*}}{\end{equation*}\vspace{-1em}}
\def\mbf{\mathbf}
\def\mbb{\mathbb}
\def\mc{\mathcal}
\def\mbs{\boldsymbol}
\def\mbbm{\mathbbm}
\def\yes{$\surd$}
\def\no{ }
\newcommand{\abs}[1]{\ensuremath{\left| #1 \right|}}
\newcommand{\subjto}{\mathop{\mathrm{s.t.}}}
\newcommand{\leqnomode}{\tagsleft@true\let\veqno\@@leqno}
\newcommand{\reqnomode}{\tagsleft@false\let\veqno\@@eqno}
\begin{document}
		%
		\title{Network Traffic Shaping for Enhancing\\Privacy in IoT Systems}
		%
		%
		%
		
		\author{Sijie~Xiong,~\IEEEmembership{Student~Member,~IEEE,}
			Anand~D.~Sarwate,~\IEEEmembership{Senior~Member,~IEEE,}
			and~Narayan~B.~Mandayam,~\IEEEmembership{Fellow,~IEEE}
			\thanks{Manuscript received March xx, 2020; revised xxxx xx, 2021.}%
			\thanks{This work was supported by the United States National Science
				Foundation under Grant numbers SaTC-1617849, CCF-1453432 and CRISP-1541069, and by DARPA and US Navy under contract N66001-15-C-4070. Any opinions, findings, and conclusions or recommendations expressed in this material are those of the author(s) and do not necessarily reflect the views of the National Science Foundation, DARPA, or the US Navy.
			}%
			\thanks{S. Xiong, A. D. Sarwate and N. B. Mandayam are with the Department of Electrical and Computer Engineering, Rutgers, The State University of New
				Jersey, Piscataway, NJ 08854 USA e-mail: sijie.xiong@rutgers.edu,
				anand.sarwate@rutgers.edu, narayan@winlab.rutgers.edu.}
		}
		
		%
		%

	\markboth{IEEE/ACM Transactions on Networking, under review, 2021}%
	{Xiong \MakeLowercase{\textit{et al.}}: Network Traffic Shaping for Enhancing Privacy in IoT Systems}
	%



	\maketitle
	
	\begin{abstract}
		Motivated by privacy issues caused by inference attacks on user activities in the packet sizes and timing information of Internet of Things (IoT) network traffic, we establish a rigorous event-level differential privacy (DP) model on infinite packet streams. We propose a memoryless traffic shaping mechanism satisfying a first-come-first-served queuing discipline that outputs traffic dependent on the input using a DP mechanism. We show that in special cases the proposed mechanism recovers existing shapers which standardize the output independently from the input. To find the optimal shapers for given levels of privacy and transmission efficiency, we formulate the constrained problem of minimizing the expected delay per packet and propose using the expected queue size across time as a proxy. We further show that the constrained minimization is a convex program. We demonstrate the effect of shapers on both synthetic data and packet traces from actual IoT devices. The experimental results reveal inherent privacy-overhead tradeoffs: more shaping overhead provides better privacy protection. Under the same privacy level, there naturally exists a tradeoff between dummy traffic and delay. When dealing with heavier or less bursty input traffic, all shapers become more overhead-efficient. We also show that increased traffic from a larger number of IoT devices makes guaranteeing event-level privacy easier. The DP shaper offers tunable privacy that is invariant with the change in the input traffic distribution and has an advantage in handling burstiness over traffic-independent shapers. This approach well accommodates heterogeneous network conditions and enables users to adapt to their privacy/overhead demands. 
	\end{abstract}
	
	\begin{IEEEkeywords}
		Internet of Things, traffic analysis attacks, network traffic shaping, differential privacy, convex optimization.
	\end{IEEEkeywords}
	
	%
	\IEEEpeerreviewmaketitle

	\section{Introduction}\label{sec:intro}
	
	%
	%
	%
	%
	\IEEEPARstart{P}{rivacy} is a crucial factor inhibiting the proliferation of IoT devices and systems. Privacy concerns are aggravated in applications such as smart home and smart healthcare where sensing data containing personal information is continuously generated and often transmitted wirelessly onto the cloud. The sheer volume of this data poses huge challenges for privacy protection and (network) resource management~\cite{porambage2016quest}.
	
	Privacy attacks for user information can happen to many forms of IoT data and motivate different kinds of countermeasures~\cite{lu2018internet}. For example, raw sensor data contains unique patterns which can be extracted by untrusted application servers for user activity inference. To counter inference attacks on raw sensor data, existing methods include but are not limited to \emph{obfuscation} (e.g., Raval et al.~\cite{raval2019olympus} and Malekzadeh et al.~\cite{malekzadeh2020privacy}) and \emph{quantization} (e.g., Xiong et al.~\cite{xiong2016randomized}). They aim to guarantee rigorous privacy and minimize data utility loss. In addition, encryption techniques protect the raw sensor data from network observers. However, even with encrypted data, the communication system itself may leak privacy. Network traffic of IoT devices contains identifiable information like packet headers, sizes and timing that are highly correlated with the underlying user activities. By means of eavesdropping on encrypted IoT network traffic to obtain such information, many recent works demonstrate successful traffic analysis attacks~\cite{tahaei2020rise} to recover private user information. 
	
	This paper addresses traffic analysis attacks on packet sizes and timing information of encrypted IoT network traffic. Examples include Apthorpe et al.~\cite{apthorpe2017closing} who use a clustering method on encrypted packet traces of smart home IoT devices to easily identify their operating states, with each state triggered by a specific user activity; Das et al.~\cite{das2016uncovering} show that encrypted traffic from wearable (e.g., a fitness tracker) Bluetooth Low Energy (BLE) signal allows a BLE sniffer to identify user activities (e.g., whether the person is at rest/working/walking/running) from packet-size and inter-arrival time distributions; Buttyan and Holczer~\cite{buttyan2012traffic} apply the Discrete Fourier Transform on traffic rates from a wireless body area sensor network to reveal the types of medical sensors mounted on the patient, hence the patient's health conditions.
	
	To mask private user information encoded in packet sizes and timing, \emph{network traffic shaping} is a standard technique to change the original packet sizes and timing by means of delaying, padding, fragmentation and inserting dummy packets. There have been extensive studies and designs of network traffic shaping in many contexts such as anonymous networks, traditional Internet, as well as IoT systems. However, designing a privacy-preserving traffic shaping mechanism $\mc{M}$ for IoT systems requires special tailoring. Here, we highlight the following unique characteristics of IoT networks~\cite{seliem2018towards} and their implications on privacy and resource requirements for $\mc{M}$,
	\begin{itemize}
		\item \textbf{Continuous sensing.} IoT devices continuously transmit sensing data in streams of packets, $\mc{M}$ should provide the \emph{same privacy guarantee at any time} during transmission.
		\item \textbf{Changing environment.} The network of IoT devices and their traffic features can change rapidly, $\mc{M}$ should guarantee privacy that is \emph{invariant} with the changes.
		\item \textbf{Limited resources.} Many IoT networks are low-bandwidth and certain applications are delay-sensitive. With resource constraints, $\mc{M}$ must be \emph{efficient} and incur minimal overhead in terms of delay and dummy traffic (padding and dummy packets) for privacy protection.
		\item \textbf{Heterogeneity.} Due to heterogeneous network conditions and user demands for privacy, $\mc{M}$ should offer \emph{tunable} privacy and optimize for the privacy-overhead tradeoff.
	\end{itemize}
	
	\subsection{Research Questions}
	In the face of traffic analysis attacks in existing IoT devices and networks that utilize packet sizes and timing information, what kind of privacy can we hope to achieve? How should we design new protocols, standards and $\mc{M}$ that not only enable privacy protection against such attacks, but also fulfill the preceding desiderata for IoT networks? How do different types of IoT network traffic impact the privacy-overhead tradeoff of $\mc{M}$? In this paper, we address these questions by setting up a packet stream model for IoT network traffic, developing a formal and tunable privacy model for protecting packet sizes and timing information. We then design a novel shaping mechanism under the established traffic and privacy models and formulate a problem to optimize its shaping overhead. In the end, we will demonstrate the performance of the shaper on different types of IoT traffic by comprehensive experiments.
	
	\begin{table*}[t]
		\centering
		\small
		\setlength\tabcolsep{5pt}
		\begin{tabular}[c]{|c|c|c|c|c|c|c|c|c|c|c|} 
			\hline
			\multicolumn{2}{|c|}{} & \multicolumn{3}{c|}{Anonymous Networks} & \multicolumn{2}{c|}{Traditional Internet} & \multicolumn{4}{c|}{IoT Networks}\\ \cline{3-11}
			\multicolumn{2}{|c|}{Network Traffic Shapers} & PM & CM & LP & Wright & Iacovazzi & Apthorpe & Alshehri & Xiong & Our \\
			\multicolumn{2}{|c|}{} &~\cite{chaum1981untraceable} &~\cite{kesdogan1998stop} &~\cite{fu2003analytical,wang2008dependent} & \cite{wright2009traffic} & \cite{iacovazzi2014internet} & \cite{apthorpe2019keeping} & \cite{alshehri2020attacking} & \cite{xiong2018defending} & Method \\
			\hline \hline
			\multicolumn{2}{|c|}{Privacy Measure}   & \multicolumn{3}{c|}{Entropy} & Accuracy    & MI       & Accuracy & $(\epsilon,\delta)$-DP & \multicolumn{2}{c|}{$\epsilon$-DP}\\
			\hline \hline
			& Invariant with prior change & \multicolumn{3}{c|}{\no}  & \no         &\no          & \no      & \multicolumn{3}{c|}{\yes} \\ \cline{2-11} 
			Privacy  & Same guarantee at any time  & \multicolumn{3}{c|}{\no}  & \no         &\no          & \no      & \multicolumn{3}{c|}{\yes} \\ \cline{2-11} 
			Guarantee & Packet sizes                & \yes    & \no   & \no      & \yes        & \yes        & \yes     & \yes      & \yes & \yes           \\ \cline{2-11} 
			& Packet timing               & \yes    & \yes  & \yes     & \no         & \yes        & \yes     & \no       & \no & \yes \\ 
			\hline \hline
			& Delay                       & \yes    & \yes  & \yes     & \no         & \yes        & \no      & \no       & \no & \yes           \\ \cline{2-11} 
			Overhead & Dummy traffic               & \yes    & \no   & \yes        & \yes        & \yes        & \yes     & \yes      & \yes & \yes           \\ \cline{2-11} 
			& Optimized for efficiency    & \no     &\no    & \no         & \yes        & \yes        & \no      & \no       & \yes & \yes \\
			\hline
		\end{tabular}
		\caption{Comparison between existing network traffic shapers and our method in terms of privacy and overhead.}
		\label{tab:related_work_cmp}
	\end{table*}
	
	\subsection{Related Work} \label{sec:related_work}
	Existing network traffic shaping mechanisms proposed in the contexts of anonymous networks, traditional Internet and IoT systems mainly differ in: i) privacy measure, ii) privacy guarantee (whether it is invariant with traffic changes, valid and tunable at all times during communication and whether the mechanism protects both packet sizes and timing information), and iii) overhead (what resources the mechanism utilizes for shaping and whether it's optimized for overhead efficiency). We center our discussion about related work in these aspects.
	
	\subsubsection{Anonymous Networks} 
	Traffic shaping in anonymous networks~\cite{diaz2004taxonomy} focus on hiding who is communicating with whom. An adversary can observe the correspondences between the size and timing of messages (packets) going into and out of an intermediate network node (e.g., a router), and make inferences about source-destination pairs in the network. Anonymity is measured by \emph{entropy} (i.e., the uncertainty of the observing adversary about the sender/recipient of the messages) and achieved by the design of \emph{mixes}. 
	
	Anonymity mixes include pool mixes (PM) such as the original Chaum~\cite{chaum1981untraceable} mix which collects a fixed number of messages, pads them to a uniform length and outputs them in a batch. It hides \emph{both the sizes and timing} of arrivals hence the correspondences between input and output messages. However, excessive padding and batching renders pool mixes \emph{inefficient} in terms of byte and delay overhead.
	
	To reduce latency, continuous mixes (CM) are proposed to delay individual messages by random amounts of time. For example, in the Stop-and-Go mix (SG-mix)~\cite{kesdogan1998stop}, random delays added to the messages follow an exponential distribution. SG-mix was later shown to provide maximum sender anonymity measured by entropy among continuous mixes~\cite{danezis2004traffic}, assuming Poisson arrivals. On top of delaying individual messages, link padding (LP) techniques such as independent~\cite{fu2003analytical} and dependent~\cite{wang2008dependent} link padding can further reduce latency and enhance anonymity by allowing insertion of dummy packets to the output traffic to match predefined transmission schedules. However, continuous mixes and link padding techniques are designed for timing-based traffic analysis attacks. By themselves, they protect \emph{only packet timing information} and assume that packets are already padded to the same size.
	
	In conclusion, mixes and link padding shape the input traffic to match either fixed or predefined transmission schedules. By changing the schedules, they can offer \emph{tunable} anonymity measured by entropy. Nonetheless, the act of padding all packets to the same size to avoid compromising anonymity makes these mechanisms inefficient in terms of byte overhead.
	
	
	\subsubsection{Traditional Internet}
	Traffic analysis attacks in traditional Internet, such as webpage identification in HTTPS connection~\cite{sun2002statistical}, spoken language detection~\cite{wright2007language} and conversation transcript reconstruction~\cite{white2011phonotactic} in encrypted VoIP calls, focus more on inferring private information within a single flow. This differs from identifying source-destination pairs in anonymous networks, and the subject of traffic privacy in this context often adopts different privacy measures.
	
	Wright et al.~\cite{wright2009traffic} and Iacovazzi and Baiocchi~\cite{iacovazzi2014internet} use \emph{accuracy of flow classification} as their privacy measure. The former proposes a countermeasure that hides \emph{only packet size information} by shaping the source packet-size distribution to look like a target distribution, while minimizing byte overhead. The latter extends the same idea but their design considers \emph{additional masking for packet timing information}. They also propose a partial masking algorithm where the tradeoff between privacy guarantee and masking cost is controlled by masking how much fraction of the entire traffic flow.
	
	\emph{Mutual information} (MI) between the original and shaped traffic is often used as a privacy measure for shaping mechanisms as well. For example, an ON-OFF shaping policy is developed by Feghhi et al.~\cite{feghhi2016proportional} to guarantee perfect privacy (0 MI) for \emph{packet timing information}. Mathur and Trappe~\cite{mathur2011bit} study the fundamental tradeoff in shaping mechanisms between delay, padding and level of unlinkability (measured by MI). They show that combining delay and padding can offer much higher privacy protection than either approach alone.
	
	\subsubsection{IoT Networks}
	Like traditional Internet, inference attacks on encrypted IoT traffic aim at recovering private information (e.g., user activities and health conditions, IoT device types and operating states, etc.) within a single packet stream. However, the aforementioned characteristics of IoT networks inspire the re-design of efficient traffic shaping mechanisms and the quest for more suitable privacy measures.
	
	Apthorpe et al.~\cite{apthorpe2019keeping} propose a stochastic traffic padding (STP) scheme to hide time periods of user activities in a smart home setting with reasonable padding overhead. They measure privacy by the \emph{accuracy of identifying user activity periods}, and trade off padding overhead with privacy by controlling the percentage of padding periods. Alshehri et al.~\cite{alshehri2020attacking} consider IoT device identification attacks based on typical packet-size sequences and propose padding packets with bytes of uniform random noise to guarantee \emph{approximate $(\epsilon,\delta)$-differential privacy} (DP)~\cite{dwork2006our}. Our earlier work~\cite{xiong2018defending} proposes a packet padding obfuscation mechanism satisfying \emph{pure $\epsilon$}-DP~\cite{dwork2006calibrating} and optimizes its padding overhead. The mechanism prevents a last-mile eavesdropper from inferring about IoT device types and operating states based on \emph{packet-size} distributions of encrypted network traffic.
	
	The novelty of this paper in the design of network traffic shaping scheme compared to the previous ones is twofold: i) it is the first shaper with \emph{pure $\epsilon$}-DP guarantees using packet fragmentation and queueing operations besides padding, dummy packets insertion and delaying and ii) it optimizes for overhead efficiency while providing tunable worst-case privacy guarantees for packet sizes as well as timing information which extends our previous work~\cite{xiong2018defending}.
	
	\subsection{Advantages of Differential Privacy} \label{sec:DP_ADV}
	
	Differential privacy~\cite{dwork2014algorithmic} has emerged over the last decade as a compelling framework for measuring the worst-case privacy risk in various applications. We believe that designing traffic shaping mechanisms with DP guarantee satisfies the various privacy requirements desired by IoT networks.
	
	Concretely, the framework of DP under continuous observation~\cite{dwork2010differential} allows us to develop a new privacy model on traffic shaping mechanisms that ensure the same worst-case privacy level at any time during transmission. On the contrary, the information-theoretic (entropy and MI) and accuracy-based privacy measures are inherently average measures at the packet stream level, and lack information about worst-case privacy loss~\cite{danezis2013measuring} at every time instance.  
	
	Information-theoretic and accuracy-based privacy measures also rely heavily on the prior distribution of the original traffic. Under these, shaping mechanisms designed and optimized for a particular source flow (e.g., SG-mix for Poisson arrivals) may perform poorly in terms of privacy guarantees~\cite{diaz2004comparison} on unstable and unpredictable real traffic. The privacy guarantees may also fall apart when attackers are equipped with other side information to update their prior beliefs. Conversely, DP does not depend on the prior distribution and has been shown to be resistant to arbitrary side information~\cite{kasiviswanathan2008note}. A shaping mechanism, if designed with DP, will guarantee the same level of privacy that is invariant with the changes in the original traffic or the attacker's side information.
	
	Lastly, DP offers directly tunable privacy parameters that can accommodate the heterogeneity in network conditions and user demands for privacy. Many of the other shaping mechanisms, however, require the choice of target traffic distributions to indirectly control their privacy guarantees.
	
	\subsection{Challenges}
	
	The listed advantages motivate us to design shaping mechanisms for encrypted IoT network traffic under the framework of DP. Table~\ref{tab:related_work_cmp} summarizes the differences between our proposed method and existing traffic shaping mechanisms. In order to counter traffic analysis attacks on packet sizes as well as timing information and to design overhead-efficient DP shapers, we face the following challenges,
	\begin{itemize}
		\item \textbf{Formal privacy model on encrypted IoT traffic.} We need to set up an appropriate IoT network traffic model, and more importantly develop a formal privacy model on shaping mechanisms with DP guarantees for protecting both packet sizes and timing information.
		\item \textbf{Easy-to-find shaper that is privacy-preserving and overhead-efficient.} We need to design a shaping mechanism that not only satisfies the formal privacy model but also consumes minimum overhead due to limited resources in IoT networks. Moreover, we want to find such privacy-preserving and overhead-efficient shaper easily.
	\end{itemize}
	\subsection{Contributions}
	
	Our work makes the following contributions to the subject of traffic privacy and countermeasure design in IoT networks,
	\begin{itemize}
		\item We set up an abstract discrete IoT packet stream model inspired by a smart home setting in Section~\ref{section:sys}. However, the methodology developed here after is generally applicable to other IoT applications. 
		\item We develop an event-level DP model on infinite packet streams in Section~\ref{sec:privacy_model}. The model defines tunable privacy guarantees for both packet sizes and timing information.
		\item In Section~\ref{sec:DPS}, we design a shaping mechanism under the event-level DP model. The mechanism satisfies a first-come-first-served (FCFS) queueing discipline. To reduce the shaping overhead needed for privacy protection, we allow the use of delay, fragmentation and dummy traffic to shape the outgoing packet stream leaving a local area IoT network (e.g., a smart home). In special cases, the mechanism recovers previous methods in the literature.
		\item We formulate the problem of finding event-level DP and overhead-efficient shaper in Section~\ref{sec:delay_opt_dps} as a constrained optimization: we minimize the expected queue size across time imposed by the shaper (as a proxy for the expected delay per packet) under given privacy and transmission efficiency levels. We further show that the optimal shaper can be easily found by convex programming.
		\item In Section~\ref{sec:EXP}, we conduct comprehensive evaluations on the privacy-overhead tradeoffs of different shapers on both synthetic data and packet traces from actual IoT devices. For IoT packet traces, we further show how shapers optimized with the assumption of independent and identically distributed (i.i.d.) network traffic perform differently on actual (probably bursty) traffic.
	\end{itemize}
	
	We believe that this work establishes a novel framework for building resource-efficient network traffic shaping systems with strong, formal and tunable privacy guarantee. The privacy-overhead tradeoffs carry meaningful indications for the design of future privacy-aware IoT systems. 
	
	\section{System Model}\label{section:sys}
	
	\begin{figure}[t]
		\centering
		\includegraphics[scale=0.35]{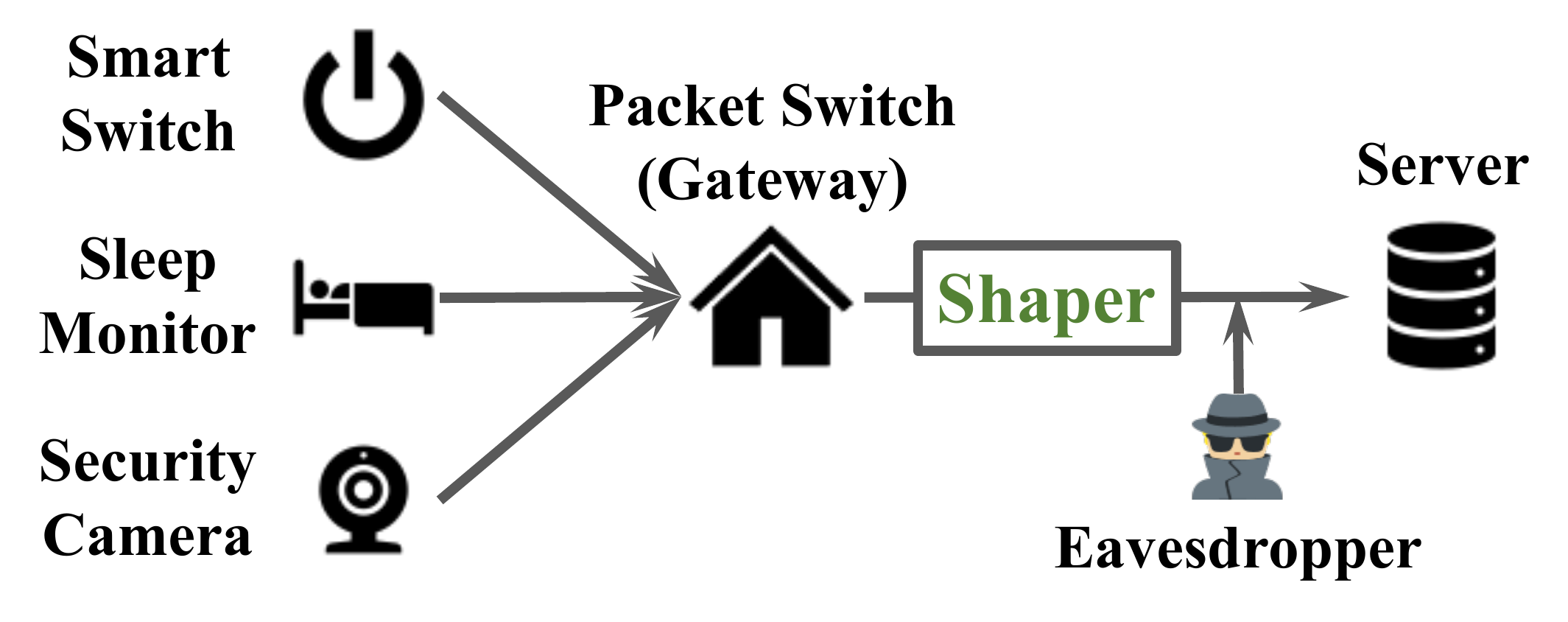}
		\caption{Illustration of a smart home traffic shaping system.}
		\label{fig:smart_home}
	\end{figure}
	
	\subsection{Overview}
	
	A prototypical realization of the IoT system~\cite{atzori2010internet} consists of a gateway system that manages networked IoT devices. For example, Fig.~\ref{fig:smart_home} shows a typical realization of the smart home IoT network. It consists of a WiFi access point (also acting as a packet switch/home gateway) that allows multiple heterogeneous monitoring devices to transmit information onto the application server in the form of data packets. 
	
	Each device has three modes of operation: sensing, update, and silence. In the sensing mode, a device can sense one or several types of user activities (or \emph{events}, e.g., a Nest camera can detect the motion of a user or whether the user is checking the camera feed) and subsequently transmit event-indicating traffic (e.g., a large packet or a short burst with distinctive size) to its application server. In the update mode, a device routinely sends packets containing status updates such as energy consumption levels and firmware versions. Lastly, devices send nothing in the silence mode. 
	
	We illustrate in Fig.~\ref{fig:agg_pkt_stream} a sample of aggregate packet stream during a 1-min time window as an input to the packet switch. Colored spikes indicate packet arrivals at different times. The smaller ones are status updates. The larger ones correspond to the event-triggered packets, for example, the 270B packet at 7s indicates that some user motion is detected by the camera and the 1117B packet at 30s marks the sleep onset of the user. The packet/burst sizes triggered by a particular type of event at various times are much alike and those triggered by different types of events are quite distinguishable~\cite{apthorpe2017smart}. Subahi and Theodorakopoulos~\cite{subahi2019detecting} survey and provide an extensive list of correspondences between user interactions with IoT mobile apps and the generated packet sizes/sequences.
	
	Denote $N^+=\{1,2,\ldots,n\}$ as the set of all observable events in an IoT network and $N\triangleq\{0\}\cup N^+$. When event $i\in N^+$ (e.g., a person going to sleep) happens, it triggers a device (e.g., a Sense Sleep monitor) to send an event packet of size $a_i>0$. Here, we model the event-indicating traffic as a single packet with distinctive size $a_i$ in order to abstract away from the details of event traffic\footnote{For example, if the event-triggered network traffic is a short burst, we can represent it by its burst size $a_i$. Additionally, if the packet/burst sizes triggered by a particular type of event change slightly at different occurring times, we can use the average packet/burst size to symbolize the event traffic.}. We also consider the aforementioned status updates as traffic generated by a special type of event to ignore the distinction. Let $i=0$ represent a \emph{null} event which stands for the silence mode, i.e., the absence of any events or updates. Denote $\mc{A}^+=\{a_1,a_2,\ldots,a_n\}$ as the set of all possible event packet sizes in bytes. We assume without loss of generality (w.l.o.g.) that $0<a_1<a_2<\cdots<a_n$. Let $\mc{A}\triangleq\{a_0\}\cup\mc{A}^+$ where $a_0=0$ represents packets of size zero in the case of null events, and $\mbs{a}\triangleq[a_0,a_1,\ldots,a_n]^\top$.  
	
	\begin{figure}[t]
		\centering
		\includegraphics[width=.85\linewidth,height=.35\linewidth]{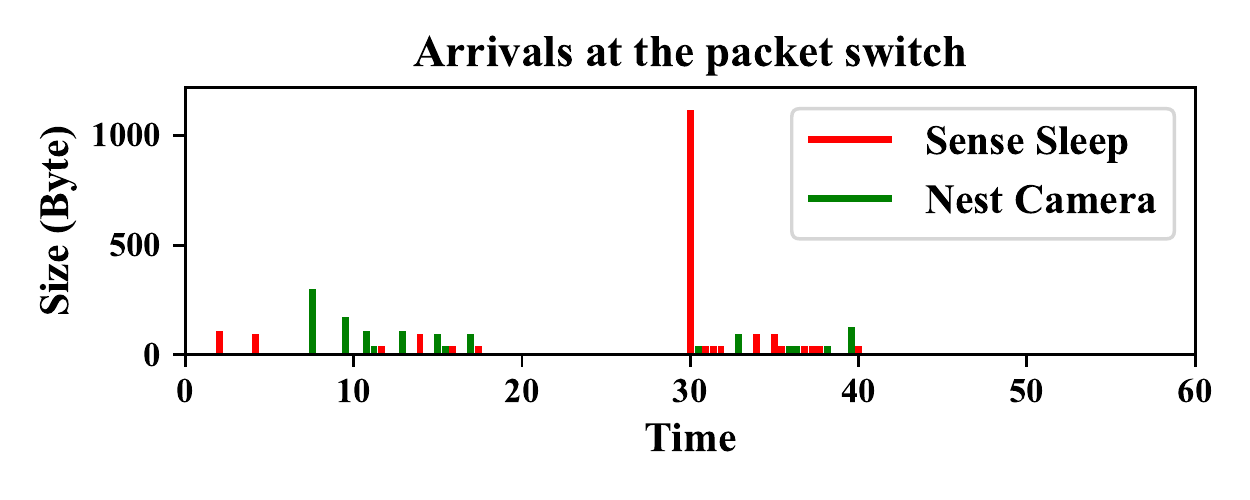}
		\caption{Aggregate packet stream arriving at the packet switch from Sense Sleep monitor (red) and Nest camera (green).}
		\label{fig:agg_pkt_stream}
	\end{figure}
	
	\subsection{Event Packet Stream} \label{sec:event_pkt_stream}
	
	We model the event packet stream arriving at the packet switch as a discrete-time\footnote{Time can be made discrete by specifying a time resolution. The transmission/processing times of variable-size packets differ, which can potentially be exploited by an eavesdropper. Here, we assume that this minute kind of timing information is destroyed by the inherent variability of network delay.} sequence $\{A_t\}$ of packets with variable sizes $A_1,A_2,\ldots$ drawn from the set $\mc{A}$. Time slots are indexed by the subscripts $t=1,2,\ldots$ and we assume one packet per slot (including zero packet). We use $A^T=(A_1,A_2,\ldots,A_T)$ to denote the $T$-\emph{prefix} of the event packet stream from the first to the $T$-th time slot. Let $\mc{T}\triangleq\{1,2,\ldots,T\}$ and denote $\mc{I}\triangleq\{t\in \mc{T}: A_t>0\}$ as the set of arrival times of event packets $A_t\in\mc{A}^+$ during time period $T$ to diffentiate from $\mc{T}$.
	
	We think of the packet sizes $A_t$ as i.i.d. with distribution $\lambda(a)$ on $\mc{A}$. Specifically, $\lambda_i\triangleq \lambda(a_i)\in(0,1)$ denotes the probability of event $i\in N^+$ (or the null event $i=0$) occurring in each time slot. It also measures the \emph{rate} at which $i\in N$ occurs over the infinite horizon, so that,
	\begin{align}
		\lambda_i=\lim_{T\rightarrow\infty}\frac{1}{T}\sum_{t=1}^T\mbbm{1}\{A_t=a_i\},\ \forall i\in N,
	\end{align}
	where $\mbbm{1}\{\cdot\}$ is the indicator function and $\sum_{i\in N} \lambda_i=1$. Let $\mbs{\lambda}\triangleq[\lambda_0,\lambda_1,\ldots,\lambda_n]^\top$. We can denote the \emph{arrival rate} of all event packets $A_{t\in\mc{I}}$ at the switch as,
	\begin{align}
		\Lambda = \sum_{i\in N^+}\lambda_i = 1-\lambda_0 \in (0,1) \label{eqn:arrival_rate}.
	\end{align}
	In addition, we measure the \emph{input byte rate} (expected number of bytes per time slot) to the packet switch as,
	\begin{align}
		B_{in} &= \mbb{E}_\lambda[A_t]=\sum_{i\in N^+}\lambda_i a_i=\mbs{\lambda}^\top\mbs{a}. \label{eqn:in_byte_rate}
	\end{align}
	
	It is useful to distinguish event packet streams based on their arrival and input byte rates. By an ``elephant'' (``mouse'') flow hereafter we mean an event packet stream with a high (\emph{resp.} low) arrival rate $\Lambda$ and input byte rate $B_{in}$. We then describe heavy traffic (e.g., aggregated from a larger number of IoT devices) as an elephant flow, and vice versa. In Section~\ref{sec:EXP}, we will show how mouse/elephant flows affect the privacy-overhead tradeoffs of our proposed shaping mechanism. 
	
	
	\subsection{Adversary Model}\label{sec:adversary_model}
	We assume that a last-mile eavesdropper in Fig.~\ref{fig:smart_home} observes the packet stream coming out of the gateway and is interested in identifying the ongoing user activities within the household. Due to packet encryption, they can only observe the timing and sizes of successive packets. We suppose that the adversary can also obtain the set $\mc{A}^+$ from other sources.
	
	In this work, we further assume an \emph{event-level} adversary who is interested in the \emph{type} and \emph{timing} of an event/activity. Concretely, the adversary wants to know whether event $i\in N^+$ or $j\in N^+$ (event type) happened given that they observed something in time slot $t$. For example, they can infer based on the transmitted packet size whether the user is checking the Nest camera's live feed (142B packet) vs being detected for motion (270B packet). The adversary can also discover the timing of an event based on the packet observing time. A 1117B packet observed at time $t$ (10pm) instead of $s$ (11pm) informs the adversary that the user is going to sleep at 10pm, since a 1117B packet is exclusively generated by the Sense Sleep monitor when it detects the user's sleep onset.
	
	Without traffic shaping, a simple receive-and-forward packet switch will output the exact same arrival $A_t$ which immediately gets exposed to the eavesdropper. This allows them to easily identify any event $\{i\in N: A_t=a_i\}$ happened at any time $t$ and fully uncover the ongoing user activities, violating privacy.
	
	\subsection{Shaping Mechanism} \label{sec:shaper_overview}
	
	To prevent the eavesdropper from inferring about private user activities, we design network traffic shaping mechanisms $\mc{M}$ at the packet switch to obfuscate the packet sizes and timing information. For ease of analysis, we let $\mc{M}:\mc{A}^T\rightarrow\mc{D}^T$ satisfy a FCFS queuing discipline. It takes as input a length-$T$ prefix of the event packet stream $A^T\in\mc{A}^T$ waiting to be served in order of arrival and outputs a same-length packet stream $D^T\in\mc{D}^T$ for arbitrary $T$. Here, $T$ can be specified beforehand as a fixed session time, or we can think of it as growing indefinitely. The shaped output $D^T$, which should convey less private user information, is also a sequence of random variables $D_{t\in\mc{T}}\in\mc{D}=\{d_0, d_1, d_2, \ldots, d_m\}$ where $\mc{D}$ is the set of output packet sizes in bytes and we assume w.l.o.g. that $0=d_0<d_1<d_2<\cdots<d_m$ with $d_m\geq a_n$. Let $M\triangleq\{0,1,2,\ldots,m\}$, $M^+\triangleq M\setminus\{0\}$, $\mc{D}^+\triangleq\mc{D}\setminus\{0\}$ and $\mbs{d}\triangleq[d_0,d_1,\ldots,d_n]^\top$.
	
	In this work, we let the shaping mechanism $\mc{M}$ perform ``surgery'' on packets from IoT devices such as splitting and reassembling, padding and delaying to map $A_t$ to $D_t$, where $D_t$ can be smaller than $A_t$. We assume an intermediate platform (e.g., a router connecting multiple local area IoT networks) that is trusted and shared by many households which can restore these manipulated packet streams before forwarding them to the intended application servers. In Section~\ref{sec:L&F}, we will discuss how this assumption could be relaxed and leave for future work the design of systems which can handle untrusted settings or the absence of such platform.
	
	Since $D_t$ disguises the true arrival $A_t$, the adversary becomes uncertain about the user activity instance. For example, if $A_t>0$ and $D_t=0$, the adversary would falsely believe that nothing has happened at time $t$. For the entire shaped packet stream $\{D_t\}$, we denote its \emph{output byte rate} as,
	\begin{align}
		B_{out} = \lim_{T\rightarrow\infty}\frac{1}{T}\sum_{t=1}^T D_t. \label{eqn:out_byte_rate}
	\end{align}
	We also let $Q_t$ be the size of the FCFS queue just after departure $D_t$. If $D_{t+1}>Q_t+A_{t+1}$, $\mc{M}$ consumes $D_{t+1}-A_{t+1}-Q_t$ amount of dummy bytes/packets to form the output $D_{t+1}$. Otherwise, $\mc{M}$ buffers the remaining bytes (if any) in the queue. Fig.~\ref{fig:shaper_diagram} illustrates the dynamics of the FCFS queue during shaping over 3 consecutive time slots when the departures have the same size $d$ for example. Packets $A_1$, $A_2$, $A_3$ encounter delays of $W_1=1$, $W_2=W_3=0$ time slots respectively and $3d-(A_1+A_2+A_3)$ dummy bytes are added for shaping.
	
	
	In light of resource constraints in IoT networks, we want to design $\mc{M}$ and minimize its shaping overhead while protecting privacy. To this end, we introduce importance overhead measures in the sequel. We will use them in the optimization of shapers later in Section~\ref{sec:OPT}, as well as in the performance comparisons between different shapers in Section~\ref{sec:EXP}.
	
	\begin{figure}[t]
		\centering
		\includegraphics[width=.75\linewidth,height=.35\linewidth]{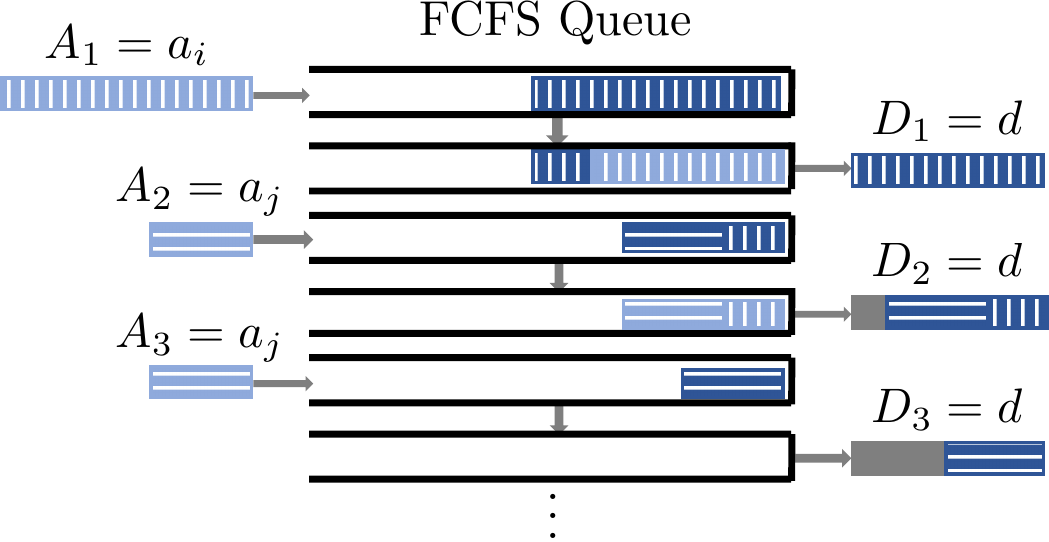}
		\caption{Illustration of the FCFS queue during shaping over 3 consecutive time slots with same-size departures. 
		}
		\label{fig:shaper_diagram}
	\end{figure}
	
	\subsection{Overhead Measures}\label{section:measures}
	
	\textbf{Transmission efficiency.} We define the transmission efficiency $\rho$ of shaping mechanisms as the total number of information bytes (i.e., bytes in the input event packet stream) divided by the total number of bytes in transmission (i.e., bytes in the shaped output packet stream) during time period $T$,
	\begin{align}
		\rho = \frac{B_{in}}{B_{out}},
	\end{align}
	where $B_{in}$~\eqref{eqn:in_byte_rate} and $B_{out}$~\eqref{eqn:out_byte_rate} are the byte rates of $\{A_t\}$ and $\{D_t\}$ respectively\footnote{The same multiplier $T$ in the numerator and denominator is cancelled out.}. The extra byte rate (dummy bytes per time slot) needed to shape $A^T$ to $D^T$ for arbitrary $T$ is $B_{out} - B_{in}=(1/\rho-1)B_{in}$. We say that a shaper is more transmission efficient (higher $\rho$) if it needs less dummy traffic (lower $1/\rho-1$) to shape the same input traffic (same $B_{in}$). 
	
	\textbf{Delay overhead.} We measure the delay overhead of a shaper as the average waiting time each event packet $A_{t\in\mc{I}}$ spends in the FCFS queue over the infinite horizon,
	\begin{align}
		\bar{W}_p=\lim_{|\mc{I}|\rightarrow\infty}\frac{1}{|\mc{I}|} \sum_{p=1}^{|\mc{I}|} W_p, \label{eqn:avg_delay_per_pkt}
	\end{align}
	where $p$ denotes the $p$-th event packet in the input packet stream, and $|\mc{I}|$ (the cardinality of $\mc{I}$) quantifies the total number of event packets during $T$, that $|\mc{I}|\rightarrow\infty$ as $T\rightarrow\infty$.
	
	\textbf{Queue size.} By the FCFS queueing discipline, the evolution of the queue size over time depicted in Fig.~\ref{fig:shaper_diagram} can be well described by the discrete-time version of Lindley's equation~\cite{lindley1952theory},
	\begin{align}
		Q_t=\max(Q_{t-1}+A_t-D_t,0),\quad \forall t\geq 1,\label{eqn:lindley_qsz}
	\end{align}
	assuming that we start with an empty queue $Q_0=0$. Define the average queue size across time as,
	\begin{align}
		\bar{Q}_t=\lim_{T\rightarrow\infty}\frac{1}{T} \sum_{t=1}^T Q_t. \label{eqn:avg_qsz_across_time}
	\end{align}
	
	It is well known from queueing theory that to prevent the queue from accumulating indefinitely, we need the following \emph{stability condition}~\cite{gallager2013stochastic},
	\begin{align}
		B_{in}<B_{out} \Leftrightarrow \rho\in(0,1). \label{eqn:stability}
	\end{align}
	This ensures for the shaping mechanism that both the average waiting time per packet and the average queue size across time converge to finite expectations ($\bar{W}_p\rightarrow\mbb{E}[W_p]<\infty$ and $\bar{Q}_t\rightarrow\mbb{E}[Q_t]<\infty$). The stability condition on the transmission efficiency $\rho$ in terms of the input and output byte rates is analogous to that on the server's \emph{utilization factor}~\cite{bertsekas1992data} in terms of customer arrival and departure rates.
	
	\subsection{Assumptions}
	
	To summarize, we make the following assumptions to create a model of the IoT traffic shaping system:
	\begin{itemize}
		\item The input packet stream $A^T$ consists of variable-sized packets and is i.i.d. across time. 
		\item The traffic shaper maps $A^T$ to an output packet stream $D^T$ by performing ``surgery'' on packets in $A^T$.
		\item The adversary observes the output packet stream $D^T$ to make inference about $A^T$.
	\end{itemize}
	In the following section, we will first present the definition of DP on data streams and formalize our privacy model for event packet streams. Particularly, our privacy guarantee will not depend on the i.i.d. assumption on $A^T$: they will hold for any realization of $A^T$. The i.i.d. assumption is only needed to construct convex programs to solve for the overhead-optimal shapers efficiently, which is of practical use. Moreover, we will demonstrate the advantage of our proposed shaper optimized under i.i.d. assumption when performed on real correlated traffic in Section~\ref{sec:exp_corr}. In Section~\ref{sec:shaping_mechanisms} onward we will look at shaping mechanisms with different guarantees according to the established DP model on event packet streams.
	
	\section{Privacy Model}\label{sec:privacy_model}
	
	As discussed in Section~\ref{sec:DP_ADV}, DP offers a quantifiable worst-case measure of privacy risk that is resistant to the change of prior distribution and the adversary's side information. We aim to design traffic shaping mechanisms $\mc{M}$ with DP guarantees to meaningfully and rigorously protect private information of continuously generated IoT network traffic and trade off privacy with shaping overhead. This section reviews the definition of DP on data streams. We then apply the definition to event packet streams to establish a formal privacy model on $\mc{M}$.
	
	\subsection{Differential Privacy on Data Streams} \label{section:stream_privacy}
	In the framework of DP, a trusted curator holds sensitive information from a group of users, creating a dataset $X\in\mc{X}$ where $\mc{X}$ denotes the universe of possible datasets. We think of the dataset $X$ as containing a set of rows with each row corresponding to a single user's data record. Two datasets $X, \tilde{X}\in\mc{X}$ are called \emph{neighboring} if they differ in a single row, i.e., a single user's data. 
	
	In the settings of continuous observation~\cite{dwork2010differential}, the dataset $X$ is updated by data streams, and the curator has to generate outputs continuously. A data stream is a time-ordered sequence of symbols $S_1,S_2,\ldots$ drawn from a domain $\mc{S}$, where each symbol represents an event. $S_t\in\mc{S}$ corresponds to the event happened in time slot $t$ and $S^T=(S_1,S_2,\ldots,S_T)$ denotes the $T$-prefix of the data stream. Entries in $X$ are therefore associated with events, or actions taken by the users. 
	
	Two differential privacy models -- \emph{event-level} and \emph{user-level} privacy~\cite{dwork2010differential} are proposed for data streams. The former hides a single event whereas the latter hides all the events of a single user. The difference between the two privacy models lies in the definition of \emph{adjacency} between stream prefixes. Specifically, $S^T$ and $\tilde{S}^T$ are event-level adjacent if there exists at most one time slot $t$ that $S_t\neq \tilde{S}_t$; $S^T$ and $\tilde{S}^T$ are user-level adjacent if for any user $S_t\neq \tilde{S}_t$ for arbitrary number of time slots. 
	
	Let $\mc{M}: \mc{S}^T\rightarrow \mc{O}$ denote a mechanism that takes as input a stream prefix $S^T\in\mc{S}^T$ of arbitrary length $T$ and \emph{randomly} outputs a \emph{transcript} $o$ in some measurable set $O\subseteq\mc{O}$ where $\mc{O}=\textrm{range}(\mc{M})$ is the output universe. DP on data streams is defined for the randomized mechanism $\mc{M}$ as follows. 
	
	\begin{defn}\label{defn:dp}
		A randomized mechanism $\mc{M}$ operating on data streams satisfies event-level (user-level) $\epsilon$-differential privacy, if for all measurable sets $O\subseteq\mc{O}$, all event-level (resp. user-level) adjacent stream prefixes $S^T, \tilde{S}^T$ and all $T$, it holds that,
		\begin{align}
			P[\mc{M}(S^T)=o] \leq e^\epsilon\cdot P[\mc{M}(\tilde{S}^T)=o],\ \forall o\in O.
		\end{align}
	\end{defn}
	
	Differential privacy on data streams ensures that the distribution of the output $o$ reveals limited information about the input $S^T$: for any other event or user-level adjacent input $\tilde{S}^T$, the output under $\tilde{S}^T$ has a similar
	distribution to that under $S^T$. The maximum distance between output distributions is bounded by $\epsilon$ in log scale, therefore $\epsilon$-DP provides a strong worst-case guarantee. It also controls the tradeoff between the false-alarm (Type I) and missed-detection (Type II) errors for an adversary trying to make a hypothesis test between $S^T$ and $\tilde{S}^T$~\cite{kairouz2015composition}. Smaller $\epsilon$ means greater indistinguishability between output distributions given adjacent inputs and hence less privacy risk. We achieve perfect privacy when $\epsilon=0$, and essentially the output becomes independent from the input.
	
	\begin{figure}[t]
		\centering
		\begin{subfigure}[]{0.055\linewidth}
			\centering
			\includegraphics[width=\linewidth,height=4\linewidth]{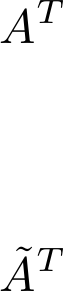}
		\end{subfigure}
		\begin{subfigure}{0.4\linewidth}
			\centering
			\includegraphics[width=.6\linewidth,height=.9\linewidth]{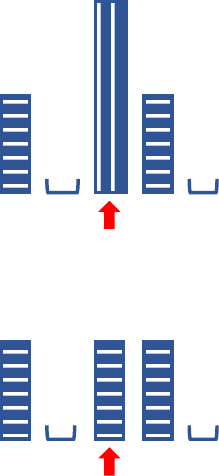}
			\caption{Packet-size adjacent.}
			\label{fig:adjacency_type}
		\end{subfigure}\hspace{.02\linewidth}
		\begin{subfigure}{0.4\linewidth}
			\centering
			\includegraphics[width=.6\linewidth,height=.9\linewidth]{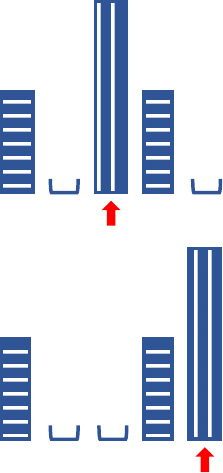}
			\caption{Packet-timing adjacent.}
			\label{fig:adjacency_timing}
		\end{subfigure}
		\caption{Illustration for different pairs of event-level adjacent packet stream prefixes $A^T$ and $\tilde{A}^T$ for $T=5$.}
		\label{fig:adjacency}
	\end{figure}
	
	\subsection{Privacy Model on Event Packet Streams} \label{sec:dp_on_pkt_stream}
	
	In the IoT setting, we are interested in scenarios where the data set $X$ is updated by the event packet stream $\{A_t\}$. For this work, we focus on the event-level privacy model. As variable packet sizes and timing are different kinds of information, we further define separate notions of event-level adjacency for packet stream prefixes $A^T$ and $\tilde{A}^T$. 
	
	\begin{defn}\label{defn:adjacency}
		Two packet stream prefixes $A^T$ and $\tilde{A}^T$ are 
		\begin{enumerate}
			\item \textbf{event-level packet-size} adjacent, if there is at most one time slot $t$ in which $0<A_t\neq \tilde{A}_t$. 
			\item \textbf{event-level packet-timing} adjacent, if there is at most one event (non-zero) packet size appearing in both $A^T$ and $\tilde{A}^T$ but in 2 different time slots $t\neq s$ respectively, that $A_t=\tilde{A}_s>0$ and $\tilde{A}_t=A_s=0$.
		\end{enumerate}
	\end{defn} %
	
	The adjacency definition captures the type and timing information an event-level adversary is interested in. We visualize 2 event-level adjacent pairs of $A^T$ and $\tilde{A}^T$ in Fig.~\ref{fig:adjacency}, with Fig.~\ref{fig:adjacency_type} showing the difference in the event packet sizes, hence the type of an event (e.g., motion detection vs checking camera feed) and Fig.~\ref{fig:adjacency_timing} displaying the different timing of an event (e.g., user going to bed at 10pm vs 11pm). In order to obfuscate such differences from the adversary, we want the same random output $D^T$ by the shaping mechanism $\mc{M}:\mc{A}^T\rightarrow\mc{D}^T$ to have similar probabilities coming from either $A^T$ or $\tilde{A}^T$. Applying Definition~\ref{defn:dp} to the event-level adjacent packet stream prefixes, we establish the privacy model for the randomized shaping mechanism $\mc{M}$ as follows.
	\begin{defn} \label{defn:dp_ind}
		Let $T$ be given and let $\mc{M}: \mc{A}^T\rightarrow\mc{D}^T$ denote a randomized shaping mechanism that takes as input a length-$T$ prefix of the event packet stream $A^T$ and outputs a same-length packet stream $D^T$. Then $\mc{M}$ is event-level packet-size/packet-timing $\epsilon$-differentially private if for all measurable sets $O\subseteq \mc{D}^T$, all event-level packet-size/packet-timing adjacent stream prefixes $A^T, \tilde{A}^T$ and all $T$, it holds that,
		\begin{align}
			\epsilon_\textrm{DP}(\mc{M})\triangleq\left|\log\frac{P[\mc{M}(A^T)=\mbf{d}^T]}{P[\mc{M}(\tilde{A}^T)=\mbf{d}^T]}\right| \leq \epsilon,\ \forall \mbf{d}^T\in O. \label{eqn:dp_defn}
		\end{align}
	\end{defn}
	
	Additionally, we use the following definition to abbreviate the privacy guarantees of the shaping mechanism $\mc{M}$.
	\begin{defn} \label{defn:dp_all}
		A randomized shaping mechanism $\mc{M}$ is event-level $(\epsilon_s, \epsilon_t)$-DP if it is event-level packet-size $\epsilon_s$-DP and packet-timing $\epsilon_t$-DP according to Definition~\ref{defn:dp_ind}.
	\end{defn}
	
	We say that a shaping mechanism is event-level private if it offers both event-level packet-size and packet-timing privacy. Note that however, an event-level private shaping mechanism cannot hide the presence or absence of a user. Heavy traffic may be generated in the presence of an active user and nothing is generated when the user is absent, creating packet stream prefixes that are user-level adjacent according to Section~\ref{section:stream_privacy}. Guaranteeing user-level privacy then requires the shaping mechanism to send a lot of dummy traffic even when the user is absent~\cite{apthorpe2019keeping}. This is generally too costly in terms of shaping overhead, and the effort may easily be in vain when the user's location can be inferred from other sources. The event-level privacy model over infinite packet streams are therefore applicable and practical when faced with an event-level adversary. 
	
	\subsection{Memoryless Shaper}
	
	There are two classes of shaping mechanisms: with \emph{memory} and \emph{memoryless}. Designing an overhead-efficient randomized shaper with memory involves finding the overhead-minimal stochastic mapping from all possible $A^T\in\mc{A}^T$ to $D^T\in\mc{D}^T$. This requires performing very high dimensional thus computationally expensive optimization for large $T$ and quickly becomes intractable as $T\rightarrow\infty$. Nonetheless, the saving on the shaping overhead is only marginal~\cite{iacovazzi2014internet}.
	
	In this paper, we focus on designing event-level DP shapers that are memoryless: $P(D_t|A^t,D^{t-1})=P(D_t|A_t)$. This simplifies the construction, optimization and analysis of the shaper with significantly reduced search space. Moreover, the current departure of the memoryless shaping mechanism only leaks information about the current arrival, whereas shaping mechanisms with memory leak additional information about all the past arrivals through the current output. The memoryless shaper can then be treated as a sequence $\{\mc{M}_t\}$ of independent mechanisms $\mc{M}_t:A_t\rightarrow D_t$, and we have,
	\begin{align}
		\epsilon_\textrm{DP}(\mc{M}) = \left|\sum_{t=1}^T\log\frac{P[\mc{M}_t(A_t)=d_t]}{P[\mc{M}_t(\tilde{A}_t)=d_t]}\right|. \label{eqn:dp_memoryless}
	\end{align}
	
	Next, we will show how to design a memoryless shaping mechanism that satisfies the event-level $(\epsilon_s,\epsilon_t)$-DP guarantee to mask the packet sizes and timing information.
	
	\section{Traffic Shaping Mechanisms} \label{sec:shaping_mechanisms}
	We start by proposing a memoryless shaping mechanism called \textbf{DPS} (\textbf{d}ifferentially-\textbf{p}rivate \textbf{s}haper). We denote it by $\mc{M}^\textrm{DPS}$ and prove its event-level $(\epsilon_s, \epsilon_t)$-DP guarantee according to Definition~\ref{defn:dp_all}. Furthermore, we show that under our IoT traffic model, different settings of $(\epsilon_s, \epsilon_t)$ let the mechanism recover some existing shapers proposed in the literature. Specifically, we name 2 other shaping mechanisms \textbf{PST} (\textbf{p}erfect privacy for both packet \textbf{s}izes and \textbf{t}iming, by setting $\epsilon_s$, $\epsilon_t=0$) and \textbf{PPS} (\textbf{p}erfect privacy for \textbf{p}acket \textbf{s}izes only, by setting $\epsilon_s=0$, $\epsilon_t=\infty$) and denote them by $\mc{M}^\textrm{PST}$ and $\mc{M}^\textrm{PPS}$, respectively. We will also highlight the differences in their privacy guarantees and overhead measures. 
	

	\subsection{Differentially Private Shaping Mechanism} \label{sec:DPS}
	
	The DPS mechanism is a sequence $\{\mc{M}^\textrm{DPS}_t\}$ of independent mechanisms $\mc{M}^\textrm{DPS}_t:\mc{A}\xrightarrow{c}\mc{D}$ acting as a discrete memoryless channel (DMC) $c$ which defines a conditional probability distribution $c_{\mc{D}|\mc{A}}(d|a)$ with $c_{ij}\triangleq c(d_j|a_i), \forall i\in N, j\in M$. The channel matrix $\mbs{C}=[c_{ij}]_{i\in N, j\in M}$ is right stochastic. Upon a packet arrival of size $A_t=a_i$, the shaper randomly issues a departure packet of size $D_t=d_j$ with probability (w.p.) $c_{ij}$. That is,
	\begin{align}
		\mc{M}^{\textrm{DPS}}_t(A_t) = d_{j}\in \mc{D}\quad \textrm{w.p.}\ c_{ij}\quad \textrm{if } A_t=a_i. \label{eqn:DPS}
	\end{align}
	
	By controlling the dependency between $A_t$ and $D_t$ via the channel $c$, the DPS mechanism can make different choices of $D^T$ given $A^T$ to control the privacy leakage and shaping overhead. To guarantee DP for event packet streams following Definition~\ref{defn:dp_all}, we let the channel $c$ satisfy a more stringent privacy model -- \emph{local differential privacy} (LDP)~\cite{warner1965randomized,kalantari2016optimal}. 
	
	\begin{defn} \label{defn:LDP}
		A channel $c:\mc{A}\rightarrow\mc{D}$ satisfies $\epsilon$-LDP~\emph{\cite{warner1965randomized}} if ~$\max \left\{\frac{c(d|a)}{c(d|\tilde{a})}\right\} \leq e^\epsilon, \forall(a,\tilde{a},d)\in\mc{A}^2\times\mc{D}$.
	\end{defn}
	The goal of designing an $\epsilon$-LDP channel $c$ is to ensure that the adversary's likelihood of guessing that the packet size is $a\in\mc{A}$ over $\tilde{a}\in\mc{A}$ does not increase, multiplicatively, more than $e^\epsilon$ after seeing the obfuscated packet size $d\in\mc{D}$. Therefore, in each time slot, the adversary is limited in inferring about the actual user activity instance from the eavesdropped packet size. The shaper $\mc{M}^\textrm{DPS}_t$ adopting an LDP memoryless channel $c$ has the following privacy guarantee.
	
	\begin{prop}\label{prop:DPS}
		The shaping mechanism $\mc{M}^\textrm{DPS}_t:\mc{A}\xrightarrow{c}\mc{D}$ with a $\max\left(\epsilon_s, \frac{\epsilon_t}{2}\right)$-LDP memoryless channel $c$ satisfying the following constraints,
		\begin{align}
			\left\{
			\begin{array}{ll}
				c_{ij} - e^{\epsilon_s}\cdot c_{kj}\leq 0,\ &\forall i,k\in N^+; \forall j\in M, \\
				c_{ij} - e^{\epsilon_t/2}\cdot c_{0j}\leq 0,\ &\forall i\in N^+; \forall j\in M, \\
				c_{0j} - e^{\epsilon_t/2}\cdot c_{kj}\leq 0,\ &\forall k\in N^+; \forall j\in M. 
			\end{array}\right. \label{eqn:priv_constr_alt}
		\end{align}
		guarantees event-level $(\epsilon_s, \epsilon_t)$-DP as stated in Definition~\ref{defn:dp_all}.
	\end{prop}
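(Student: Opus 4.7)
The plan is to exploit the memoryless structure of $\mc{M}^{\textrm{DPS}}$ so that the log-likelihood ratio in Definition~\ref{defn:dp_ind} decomposes into a sum over time slots, and then to bound each nonzero term using the LDP-style constraints in~\eqref{eqn:priv_constr_alt}. Concretely, I would start from equation~\eqref{eqn:dp_memoryless}, which expresses $\epsilon_{\textrm{DP}}(\mc{M})$ as $\bigl|\sum_{t=1}^T \log\bigl(c(d_t\mid A_t)/c(d_t\mid \tilde{A}_t)\bigr)\bigr|$. Since any time slot in which $A_t=\tilde{A}_t$ contributes zero, the burden is entirely on the slots where the two adjacent prefixes differ; the rest of the proof is a case analysis on the two adjacency types in Definition~\ref{defn:adjacency}.

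For the packet-size case, $A^T$ and $\tilde{A}^T$ differ in exactly one slot $t$ with $A_t=a_i$ and $\tilde{A}_t=a_k$ for some $i,k\in N^+$. Only one term survives in the sum, so I must show $|\log(c_{ij}/c_{kj})|\le \epsilon_s$ for the output index $j$ corresponding to $d_t$. The first line of~\eqref{eqn:priv_constr_alt} gives $c_{ij}\le e^{\epsilon_s}c_{kj}$ for every ordered pair $(i,k)\in N^+\times N^+$, so swapping the roles of $i$ and $k$ yields the two-sided bound, proving event-level packet-size $\epsilon_s$-DP.

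The packet-timing case is the main subtlety, since by Definition~\ref{defn:adjacency} the two adjacent prefixes now differ in \emph{two} time slots: at some slot $t$ we have $A_t=a_i>0$, $\tilde{A}_t=0$, and at some other slot $s$ we have $A_s=0$, $\tilde{A}_s=a_i$. Writing $j_t, j_s$ for the indices of $d_t, d_s$, the surviving piece of the sum is $\log(c_{i,j_t}/c_{0,j_t})+\log(c_{0,j_s}/c_{i,j_s})$. Using the triangle inequality to separate the two terms, I would bound each one by $\epsilon_t/2$: the second line of~\eqref{eqn:priv_constr_alt} controls $c_{i,j_t}/c_{0,j_t}$ from above by $e^{\epsilon_t/2}$ and the third line controls $c_{0,j_t}/c_{i,j_t}$ from above by $e^{\epsilon_t/2}$, giving $|\log(c_{i,j_t}/c_{0,j_t})|\le\epsilon_t/2$, and the same argument applies to the $s$-slot term. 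Adding yields the desired bound $\epsilon_t$, establishing event-level packet-timing $\epsilon_t$-DP and, together with the size bound, the $(\epsilon_s,\epsilon_t)$-DP guarantee of Definition~\ref{defn:dp_all}.

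The main obstacle is precisely the factor of two in the timing case: because a timing swap contributes two differing slots instead of one, the composition of memoryless LDP releases inflates the leakage by a factor of two. This is exactly why the LDP parameter must be tightened to $\max(\epsilon_s,\epsilon_t/2)$, and why the last two lines of~\eqref{eqn:priv_constr_alt} use $e^{\epsilon_t/2}$ rather than $e^{\epsilon_t}$. Once this budgeting is made explicit, the proof is a short direct calculation, and no further assumption on the input distribution is needed since the bounds hold pointwise for every realization $\mbf{d}^T$.
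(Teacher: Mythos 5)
Your proposal is correct and follows essentially the same route as the paper's proof: decompose $\epsilon_{\textrm{DP}}$ via the memoryless property, note that only the differing slots contribute, bound the single surviving term by $\epsilon_s$ in the packet-size case, and use the triangle inequality with the $e^{\epsilon_t/2}$ constraints to bound the two surviving terms by $\epsilon_t$ in the packet-timing case. Your remark on why the budget must be split as $\epsilon_t/2$ across the two differing slots matches the paper's stated intuition exactly.
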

	\begin{proof}
		See Appendix~\ref{appdix:DPS}.
	\end{proof}
	
	The intuition behind the set of constraints is that, i) given 2 input packet sizes $a_i\neq a_k$, the channel $c$ will output the same packet size $d_j$ with similar probabilities, whose ratio is bounded by $e^{\epsilon_s}$ or $e^{\epsilon_t/2}$; ii) for protecting timing information, the privacy budget $\epsilon_t$ is equally divided into the 2 time slots where event-level packet-timing adjacent prefixes $A^T$ and $\tilde{A}^T$ differ, as shown in Fig~\ref{fig:adjacency_timing}.
	
	The output byte rate and transmission efficiency level of the DPS mechanism given $c$ are,
	\begin{align}
		B^{\textrm{DPS}}_{out} 
		& = \mbb{E}_{\lambda\times c}[D_t] 
		= \sum_{i\in N}\lambda_i\sum_{j\in M}c_{ij}d_j
		= \mbs{\lambda}^\top\mbs{C}\mbs{d},
		\label{eqn:out_byte_rate_dps} \\
		\rho^\textrm{DPS}
		& = B_{in}/B^\textrm{DPS}_{out}
		= \mbs{\lambda}^\top\mbs{a}/\mbs{\lambda}^\top\mbs{C}\mbs{d}. 
		\label{eqn:rho_dps}
	\end{align}
	
	The complete tunability of privacy levels $(\epsilon_s,\epsilon_t)$ in protecting the packet sizes and timing information makes the DPS mechanism well adaptable to heterogeneous network conditions and user demands for privacy. With this, a traffic shaping system can control the privacy parameters to be able to interpolate between a system that guarantees perfect privacy and one without any privacy protection. We can easily verify that setting $\epsilon_s, \epsilon_t=\infty$ in the constraints allows $\mbs{C}$ to become an identity matrix (assuming $\mc{A}=\mc{D}$ w.l.o.g.). The shaper with an identity channel matrix outputs $D_t=A_t, \forall t$ and offers no privacy protection. In the sequel, we look at two other non-trivial settings in the perfect privacy regime.
	

	\subsection{Perfect Privacy Shaping Mechanism} \label{sec:PST}
	
	The PST mechanism is a special case of the DPS mechanism by setting $\epsilon_s, \epsilon_t=0$ in the constraints~\eqref{eqn:priv_constr_alt}. By simple algebra, this setting forces $c_{ij}=c_{kj}, \forall i,k\in N; \forall j\in M$. Then the channel matrix becomes rank-one:  $\mbs{C}=\mbf{1}\cdot\mbs{\mu}^\top$, with all rows equal to the same probability vector $\mbs{\mu}^\top=[\mu_0,\mu_1,\ldots,\mu_m]$, which defines a probability distribution $\mu(d)$ on $\mc{D}$ with $\mu_j\triangleq\mu(d_j), \forall j\in M$. We can describe PST as,
	\begin{align}
		\mc{M}^{\textrm{PST}}_t(A_t) = d_{j}\in \mc{D}\quad \textrm{w.p.}\ \mu_j\quad \forall A_t\in\mc{A}. \label{eqn:PST}
	\end{align}
	Essentially, the PST mechanism chooses departures $D_t\sim\mbs{\mu}$ \emph{independently} from the arrivals $A_t$ in every time slot and generates a \emph{standardized} packet stream that won't leak any information about private user activities.
	\begin{prop} \label{prop:PST}
		$\mc{M}^{\textrm{PST}}$ guarantees perfect event-level privacy, or $(0,0)$-DP according to Definition~\ref{defn:dp_all}.
	\end{prop}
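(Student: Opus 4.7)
The plan is to derive Proposition~\ref{prop:PST} as an immediate specialization of Proposition~\ref{prop:DPS}. Setting $\epsilon_s = \epsilon_t = 0$ in the LDP constraints~\eqref{eqn:priv_constr_alt} forces $c_{ij} = c_{kj}$ for every pair $i,k \in N$ and every $j \in M$, since each inequality of the form $c_{ij} - c_{kj} \leq 0$ combined with its symmetric counterpart yields equality. Hence the channel matrix collapses to the rank-one form $\mbs{C} = \mbf{1}\cdot\mbs{\mu}^\top$ described in~\eqref{eqn:PST}, so PST is an instance of DPS with LDP parameter $\max(\epsilon_s, \epsilon_t/2) = 0$ and Proposition~\ref{prop:DPS} directly yields event-level $(0,0)$-DP.

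Alternatively, I would verify the claim from first principles via the memoryless expression~\eqref{eqn:dp_memoryless}. For any $t$ and any pair of inputs $A_t, \tilde{A}_t \in \mc{A}$, the PST mechanism outputs $D_t = d_j$ with probability $\mu_j$ regardless of the input, so each factor $P[\mc{M}^\textrm{PST}_t(A_t)=d_t]/P[\mc{M}^\textrm{PST}_t(\tilde{A}_t)=d_t]$ equals $1$ and its logarithm vanishes. Summing over $t$ gives $\epsilon_\textrm{DP}(\mc{M}^\textrm{PST}) = 0$ for every output realization $\mbf{d}^T \in O$, and this holds simultaneously for the packet-size and packet-timing notions of adjacency in Definition~\ref{defn:adjacency} because neither notion affects the output distribution.

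The only minor subtlety worth addressing is what happens when some $\mu_j = 0$: in that case the ratio is nominally $0/0$, but since both numerator and denominator vanish in lockstep, the corresponding event has measure zero under both $A^T$ and $\tilde{A}^T$ and can be excluded from $O$ without affecting the bound. I do not foresee any genuine obstacle in the proof: both routes are essentially a one-line verification, and the statement primarily serves as a consistency check confirming that DPS recovers a fully input-independent, perfectly private shaper in the $\epsilon_s, \epsilon_t \to 0$ limit.
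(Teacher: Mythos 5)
Your proposal is correct, and your second (``first principles'') route is essentially the paper's own proof: the paper simply notes that for either notion of adjacency and any output realization, $\epsilon_\textrm{DP}(\mc{M}^\textrm{PST})=\left|\sum_{t=1}^T\log\frac{\mu(d_t)}{\mu(d_t)}\right|=0$. Your first route, deriving the result as the $\epsilon_s=\epsilon_t=0$ specialization of Proposition~\ref{prop:DPS}, is also sound and consistent with the paper's discussion in Section~\ref{sec:PST}, but the direct verification is what the paper actually uses.
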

	\begin{proof}
		See Appendix~\ref{appdix:PST}.
	\end{proof}
	
	The kind of standardization performed by PST is similarly introduced as the ``fixed pattern masking'' by Iacovazzi and Baiocchi~\cite{iacovazzi2014internet}. The output byte rate and transmission efficiency level of the PST mechanism given $\mu$ are,
	\begin{align}
		B^\textrm{PST}_{out}
		& = \mbb{E}_\mu[D_t]=\sum_{j\in M}\mu_jd_j=\mbs{\mu}^\top\mbs{d}.
		\label{eqn:byte_rate_pst}\\
		\rho^\textrm{PST}
		& = B_{in}/B^\textrm{PST}_{out}=\mbs{\lambda}^\top\mbs{a} / \mbs{\mu}^\top\mbs{d}, \label{eqn:efficiency_pst}
	\end{align}
	
	\begin{table*}[t]
		\centering
		\small
		\begin{tabular}[c]{|c|>{\centering\arraybackslash}m{4.5cm}|>{\centering\arraybackslash}m{4.5cm}|>{\centering\arraybackslash}m{5.5cm}|}
			\hline
			Shapers & DPS 
			& PST 
			& PPS \\
			\hline
			Description &   \begin{eq}
				\mc{M}^{\textrm{DPS}}_t(A_t) = d_{j}\ \textrm{w.p.}\ c_{ij}\ \textrm{if } A_t=a_i
			\end{eq}
			&   \begin{eq}
				\mc{M}^{\textrm{PST}}_t(A_t) = d_{j}\ \textrm{w.p.}\ \mu_j\ \forall A_t\in\mc{A}
			\end{eq}
			&   \begin{eq}
				\mc{M}^{\textrm{PPS}}_t(A_t) 
				= \left\{
				\begin{array}{lll}
					0 & \textrm{w.p.}\ 1 & \text{if } A_t = 0\\
					d_{j} & \textrm{w.p.}\ \upsilon_j & \text{if } A_t > 0
				\end{array} \right.
			\end{eq} \\
			\hline
			Privacy & $(\epsilon_s, \epsilon_t)$-DP  
			& Perfect event-level privacy  
			& Perfect event-level privacy for packet sizes \\ 
			\hline
			Efficiency 
			& $\rho^\textrm{DPS}=\mbs{\lambda}^\top\mbs{a}/\mbs{\lambda}^\top\mbs{C}\mbs{d}$         
			& $\rho^\textrm{PST}=\mbs{\lambda}^\top\mbs{a}/\mbs{\mu}^\top\mbs{d}$         
			& $\rho^\textrm{PPS}=\mbs{\lambda}^\top\mbs{a}/\mbs{\upsilon}^\top\mbs{d}\cdot\frac{1}{\Lambda}$ \\ 
			\hline
			Optimization
			& \eqref{opt:DPS}
			&
			\begin{eq}
				\begin{aligned}
					\min_{\mbs{\mu}}\ & \mbb{E}[Q_t]\\
					\subjto\ 
					& \mbs{\mu}^\top\mbs{d}\cdot \rho \leq \mbs{\lambda}^\top\mbs{a}, \\
					& \mbs{\mu}^\top\mbs{d} > \mbs{\lambda}^\top\mbs{a}, \\
					& 0 \preceq \mbs{\mu} \preceq 1,\\
					& \mbf{1}^\top\mbs{\mu} = 1.
				\end{aligned}
				\tag{$\mc{P}^\textrm{PST}$} \label{opt:PST}
			\end{eq}
			&
			\begin{eq}
				\begin{aligned} 
					\min_{\mbs{\upsilon}}\ & \mbb{E}[Q_t]\\
					\subjto\ 
					& \Lambda\mbs{\upsilon}^\top\mbs{d}\cdot \rho \leq \mbs{\lambda}^\top\mbs{a}, \\
					& \Lambda\mbs{\upsilon}^\top\mbs{d} > \mbs{\lambda}^\top\mbs{a}, \\
					& 0 \preceq \mbs{\upsilon} \preceq 1,\\
					& \mbf{1}^\top\mbs{\upsilon} = 1.
				\end{aligned}
				\tag{$\mc{P}^\textrm{PPS}$} \label{opt:PPS}
			\end{eq} \\
			\hline
		\end{tabular}
		\caption{Comparison between shaping mechanisms in the design and optimization, privacy guarantees and overhead measures.}
		\label{tab:shapers_cmp}
	\end{table*}
	
	\subsection{Perfect Privacy for Packet Sizes Only}\label{sec:PPS}
	
	In many cases, network conditions or user demands for privacy may place constraints on the delay and byte rate overhead which preclude perfect event-level privacy for both packet sizes and timing. In these scenarios, an alternative kind of shaping mechanisms relax the privacy guarantee to focus on perfect event-level privacy for packet sizes only. 
	
	Here, we describe such a shaping mechanism PPS: it standardizes only the sizes in the event packet stream without changing the timing information. That is, PPS mechanism will output packets only when event packets $A_{t\in\mc{I}}$ arrive at the switch. Then, the output packet sizes $D_{t\in\mc{I}}$ are drawn i.i.d. from a chosen distribution $\upsilon(d)$ on $\mc{D}$ similarly defined as $\mu(d)$ in the PST mechanism, with $\mbs{\upsilon}=[\upsilon_0,\ldots,\upsilon_m]^\top$ denoting the probability vector. If there is no arrival ($A_t=0$), nothing is sent out ($D_t=0$). We describe the PPS mechanism by, 
	\begin{align}
		\mc{M}^{\textrm{PPS}}_t(A_t) = \left\{
		\begin{array}{lll}
			0 & \textrm{w.p.}\ 1 & \text{if } A_t = 0,\\
			d_{j}\in \mc{D} & \textrm{w.p.}\ \upsilon_j & \text{if } A_t > 0.
		\end{array} \right. \label{eqn:PPS}
	\end{align}
	
	The PPS mechanism is another special case of DPS with $\epsilon_s=0, \epsilon_t=\infty$. One can easily show that it corresponds to a channel matrix $\mbs{C}$ with the first row as $[1, 0, \ldots, 0]$ and the rest of the rows all equal to $\mbs{\upsilon}^\top$. While preventing privacy leakage via packet sizes, PPS permits the timing information to remain unchanged and is equivalent to Wright's traffic morphing method~\cite{wright2009traffic} when packet fragmentation is allowed.
	
	\begin{prop} \label{prop:PPS}
		$\mc{M}^{\textrm{PPS}}$ guarantees perfect event-level packet-size privacy, or $(0,\infty)$-DP according to Definition~\ref{defn:dp_all}.
	\end{prop}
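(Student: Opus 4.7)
The plan is to mirror the proof of Proposition~\ref{prop:PST} and work directly with the memoryless decomposition~\eqref{eqn:dp_memoryless}, treating the two halves of the $(0,\infty)$ claim separately. First I would observe that $\mc{M}^\textrm{PPS}$ is the special instance of $\mc{M}^\textrm{DPS}$ whose channel matrix has first row $[1,0,\ldots,0]$ and every remaining row equal to $\mbs{\upsilon}^\top$, which is what makes $\epsilon_s=0$ plausible while $\epsilon_t$ has to be taken infinite.

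To dispatch event-level packet-size $\epsilon_s=0$-DP, take any packet-size adjacent pair $A^T,\tilde{A}^T$, which by Definition~\ref{defn:adjacency} agree everywhere except at some single slot $t$ with $A_t,\tilde{A}_t\in\mc{A}^+$. From~\eqref{eqn:PPS}, each strictly positive input produces an output size drawn from the same distribution $\mbs{\upsilon}$, so $P[\mc{M}^\textrm{PPS}_t(A_t)=d_j]=\upsilon_j=P[\mc{M}^\textrm{PPS}_t(\tilde{A}_t)=d_j]$ for every $d_j\in\mc{D}$. The $t$-th log-ratio in~\eqref{eqn:dp_memoryless} therefore vanishes, and every $s\neq t$ contributes exactly $0$ to the sum because $A_s=\tilde{A}_s$ yields identical conditional output distributions. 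Summing gives $\epsilon_\textrm{DP}(\mc{M}^\textrm{PPS})=0$, which is the event-level packet-size $\epsilon_s=0$-DP claim of Definition~\ref{defn:dp_ind}.

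The packet-timing bound $\epsilon_t=\infty$ is vacuous as an upper bound and needs only a brief remark; for completeness I would note that it is tight, because at a time-shift slot where $A_t>0$ but $\tilde{A}_t=0$, the output under $A_t$ is $\mbs{\upsilon}$-distributed while the output under $\tilde{A}_t$ is deterministically $0$, so any observed $d_t\neq 0$ drives the corresponding log-ratio in~\eqref{eqn:dp_memoryless} to $+\infty$. Combining the two parts through Definition~\ref{defn:dp_all} yields $(0,\infty)$-DP. There is no real mathematical obstacle here; the only care required is bookkeeping which slot indices contribute nontrivial factors in~\eqref{eqn:dp_memoryless} and being explicit that the two adjacency notions in Definition~\ref{defn:adjacency} must be treated separately, so that the tight $\epsilon_t=\infty$ behavior does not contaminate the argument for $\epsilon_s=0$.
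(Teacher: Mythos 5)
Your proof is correct and follows essentially the same route as the paper: decompose $\epsilon_\textrm{DP}$ via the memoryless property, note that every slot contributes a log-ratio of $\log(\upsilon_j/\upsilon_j)=0$ or $\log(1/1)=0$ for packet-size adjacent prefixes, and treat the $\epsilon_t=\infty$ half as vacuous. The paper's version is just terser (it splits the sum over $\mc{I}$ and $\mc{T}\setminus\mc{I}$ rather than isolating the single differing slot), and your added remark that $\epsilon_t=\infty$ is tight is a nice but inessential extra.
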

	\begin{proof}
		See Appendix~\ref{appdix:PPS}.
	\end{proof}
	
	The output byte rate and transmission efficiency level of the PPS mechanism given $\upsilon$ are,
	\begin{align}
		B^{\textrm{PPS}}_{out}
		& = \mbb{E}[D_t]
		= \lambda_0\mbb{E}[D_t|A_t=a_0]+\Lambda\mbb{E}_{\upsilon}[D_t|A_t\neq a_0]\nonumber\\
		& = \lambda_0d_0+\Lambda\sum_{j\in M}\upsilon_jd_j
		= \Lambda\mbs{\upsilon}^\top\mbs{d}. 
		\label{eqn:byte_rate_pps}\\
		\rho^\textrm{PPS}
		& = B_{in}/B^\textrm{PPS}_{out}
		= \mbs{\lambda}^\top\mbs{a}/\mbs{\upsilon}^\top\mbs{d}\cdot\frac{1}{\Lambda}. \label{eqn:efficiency_pps}
	\end{align}
	
	Comparing~\eqref{eqn:byte_rate_pps} with~\eqref{eqn:byte_rate_pst} and~\eqref{eqn:efficiency_pps} with~\eqref{eqn:efficiency_pst}, we see that for $\Lambda\in(0,1)$~\eqref{eqn:arrival_rate} and if $\mbs{\mu}=\mbs{\upsilon}$, $B^\textrm{PPS}_{out}=\Lambda B^\textrm{PST}_{out}<B^\textrm{PST}_{out}$ and $\rho^\textrm{PPS}=\rho^\textrm{PST}/\Lambda>\rho^\textrm{PST}$. In general, a traffic shaping mechanism can be more transmission efficient for satisfying a less restrictive privacy guarantee.
	
	\section{Delay-Optimal Shapers} \label{sec:OPT}
	
	In order to efficiently utilize the limited resources in IoT networks, it is crucial to optimize the DPS, PST and PPS mechanisms to incur minimal shaping overhead. In this section, we formulate the problem of finding overhead-optimal shapers as constrained optimizations: for a given transmission efficiency level $\rho\in(0,1)$ that ensures queue stability ($\bar{W}_p\rightarrow\mbb{E}[W_p]<\infty$ and $\bar{Q}_t\rightarrow\mbb{E}[Q_t]<\infty$), we wish to find optimal $\mbs{C}^*$, $\mbs{\mu}^*$ and $\mbs{\upsilon}^*$ that minimize the delay overhead $\mbb{E}[W_p]$.
	
	Expressing and analyzing $\mbb{E}[W_p]$ as a function of $\mbs{C}^*$, $\mbs{\mu}^*$ or $\mbs{\upsilon}^*$ is however intractable. Because of packet splitting and padding, the waiting time $W_p$ of event packets depends heavily on both the past and future arrivals and departures. For example, using Fig.~\ref{fig:shaper_diagram} again for illustration, and consider the constant departures as random draws from $\mbs{\mu}$ by the PST mechanism. The arrival $A_1=a_i$ waits $W_1=1$ time slot because: i) the queue was depleted before $A_1$, ii) $D_1=d<A_1$ is drawn with probability $\lambda_i\cdot\sum_{d<a_i,d\in\mc{D}}\mu(d)$, iii) $A_2=a_j$ arrives with probability $\lambda_j$, and iv) $D_2=d>(A_1+A_2-D_1)=(a_i+a_j-d)$ is drawn with probability $\lambda_i\lambda_j\cdot\sum_{d>(a_i+a_j)/2,d\in\mc{D}}\mu(d)$. Due to such complicated dependency, it is beyond the reach to establish a clear relationship between $\mbb{E}[W_p]$ and $\mbs{\mu}$.
	
	Conversely, the discrete-time Lindley's equation~\eqref{eqn:lindley_qsz} describing the evolution of queue size in terms of arrival and departure pairs $(A_t,D_t)$ provides a straightforward mathematical model that allows for subsequent analysis. We rewrite it as,  
	\begin{align}
		Q_t = \max(Q_{t-1}+X_t,0), \label{eqn:lindley_qsz2}
	\end{align}
	where $X_t\triangleq A_t-D_t$ is i.i.d. across time by the i.i.d. assumption on $A_t$ and the memoryless property of the mechanisms. For $\rho=B_{in}/B_{out}\in(0,1)$, we have $\mbb{E}[X_t]=\mbb{E}[A_t]-\mbb{E}[D_t]=(\rho-1)B_{out}<0$. According to~\cite{grassmann1989numerical}, the expected queue size $\mbb{E}[Q_t]$ can then be very efficiently and well approximated with the use of Wiener-Hopf factorization (WHF) method.
	
	Based on Little's law~\cite{little2008little}, the smaller the queue size is, the less waiting time the event packets will experience. We then propose minimizing the expected queue size across time $\mbb{E}[Q_t]$ as a proxy for the minimization of the expected delay per event packet $\mbb{E}[W_p]$\footnote{Note that by Little's law, $\mbb{E}[Q_t]=\hat{\Lambda}\mbb{E}[W_t]$ where $\hat{\Lambda}$ and $\mbb{E}[W_t]$ are the arrival rate and expected waiting time at the byte level, different from those at the packet level ($\Lambda$ and $\mbb{E}[W_p]$). A shaper that minimizes $\mbb{E}[Q_t]$ simultaneously minimizes $\mbb{E}[W_t]$. It may not yield the exact minimum of $\mbb{E}[W_p]$ but should be good enough in practice. We therefore propose minimizing $\mbb{E}[Q_t]$ as proxy for the minimization of $\mbb{E}[W_p]$ thanks to the tremendous computational advantage of WHF.} to find the delay-optimal shapers. We will also justify this choice by empirical results in Section~\ref{sec:EXP}. More importantly, we will show that $\mbb{E}[Q_t]$ enjoys a nice property of being convex in the optimization variables $\mbs{C}$, $\mbs{\mu}$ or $\mbs{\upsilon}$.
	
	\subsection{Delay-Optimal DPS} \label{sec:delay_opt_dps}
	
	For given values of $(\epsilon_s, \epsilon_t)$ and $\rho\in(0,1)$ and a given set of output packet sizes $\mc{D}$, we can find the optimal $\max\left(\epsilon_s, \frac{\epsilon_t}{2}\right)$-LDP channel matrix $\mbs{C}^*$ that minimizes the expected queue size $\mbb{E}[Q_t]$ by solving the following optimization problem,
	\begin{equation}
		\begin{aligned} 
			\min_{\mbs{C}}\quad & \mbb{E}[Q_t] \\
			\subjto
			\quad & \mbs{\lambda}^\top\mbs{C}\mbs{d}\cdot\rho \leq \mbs{\lambda}^\top\mbs{a}, \\
			& \mbs{\lambda}^\top\mbs{C}\mbs{d} > \mbs{\lambda}^\top\mbs{a}, \\
			& \mbs{C}\ \textrm{satisfies~\eqref{eqn:priv_constr_alt}}, \\
			& 0\preceq \mbs{C} \preceq 1, \\
			& \mbs{C}\cdot\mbs{1}_{m+1} = \mbs{1}_{n+1}.
		\end{aligned}
		\tag{$\mc{P}^\textrm{DPS}$}\label{opt:DPS}
	\end{equation}
	The first constraint makes sure that the transmission efficiency $\rho^\textrm{DPS}$~\eqref{eqn:rho_dps} is at least $\rho$. The second constraint ensures a stable FCFS queue so that the optimal value is finite. The third constraint enforces a $\max\left(\epsilon_s, \frac{\epsilon_t}{2}\right)$-LDP channel matrix $\mbs{C}$ for the DPS mechanism to satisfy event-level $(\epsilon_s,\epsilon_t)$-DP according to Proposition~\ref{prop:DPS}. The last 2 constraints restrict $\mbs{C}$ to be right stochastic, where $\preceq$ is element-wise comparison.
	
	\begin{algorithm}[t]
		\SetAlgoLined
		\SetKwInput{KwData}{Cache}
		\caption{DPS mechanism}
		\label{alg:DPS}
		\LinesNumbered
		\KwIn{Arrival packet $p_a$ at time $t$; $\mbs{C}^*$}
		\KwOut{Departure packet $p_d$ at time $t$}
		\KwData{FCFS queue $Q$ with $q$ bytes at time $t$}
		\eIf{(arrival packet $p_a$)}{
			$a \gets$ size($p_a$); \\
			Enqueue(Q, $p_a$); \\
			$q \gets q + a$; 
		}{
			$a \gets 0$;
		}
		Generate departure packet size $d \sim c^*_{\mathcal{D}|\mathcal{A}}(d|a)$\;
		\If{$d>0$}{
			\eIf{$q<d$} {
				$tmp \gets$ Dequeue($Q$, $q$); \\
				$p_d \gets$ Pad($tmp$, $d-q$ dummy bytes); \\
				$q \gets 0$;
			}{
				$p_d \gets$ Dequeue($Q$, $d$); \\
				$q \gets q-d$;
			}
			Push departure packet $p_d$ out.
		}
	\end{algorithm}
	
	\subsubsection{Solution Overview}
	
	To utilize the DPS mechanism, the shaping system first gathers prior information about the traffic statistics of networked IoT devices. It infers about the set of possible packet sizes $\mc{A}$ and their arrival rates $\mbs{\lambda}$ by observing the traffic offline for a period of time, that the devices monitor user activities and generate traffic as they are but the system does not shape the traffic for output just yet. The system can discretize the traffic by the minimum packet interval to ensure one packet per time slot and calculate $\mbs{\lambda}$ as the empirical probability mass function (PMF) on $\mc{A}$. 
	
	Once a user specifies the transmission efficiency $\rho\in(0,1)$ and privacy preferences $(\epsilon_s,\epsilon_t)$, the system solves~\ref{opt:DPS} to find the optimal channel matrix $\mbs{C}^*$, or $c^*_{\mc{D}|\mc{A}}(d|a)$, where $\mc{D}$ can be chosen to be the same as $\mc{A}$. Now the system shapes the input traffic through a FCFS queue following the example in Fig.~\ref{fig:shaper_diagram}, whose input-output relationship is governed by $\mbs{C}^*$. That is, in each time slot $t$, it samples an output size $D_t$ given arrival size $A_t$ according to~\eqref{eqn:DPS}. Algorithm~\ref{alg:DPS} describes how the DPS mechanism processes its input and output for shaping packetized traffic during each time slot $t$.
	
	Every time the network of IoT devices changes or the user specifies a different set of privacy and efficiency parameters, the system updates $\mc{A}$, $\mbs{\lambda}$ and solves for $\mbs{C}^*$ again for shaping.
	
	\subsection{Delay-Optimal PST and PPS}
	We argued in Section~\ref{sec:shaping_mechanisms} that PST ($\mbs{\mu}$) and PPS ($\mbs{\upsilon}$) shapers are special cases of the DPS mechanism. To find the delay-optimal PST (PPS) shaper, we can solve the same optimization problem~\eqref{opt:DPS} for the optimal $\mbs{C}^*$ with $\epsilon_s=\epsilon_t=0$ (\emph{resp.} $\epsilon_s=0$, $\epsilon_t=\infty$) and then infer about the corresponding optimal $\mbs{\mu}^*$ (\emph{resp.} $\mbs{\upsilon}^*$). Alternatively, we can reduce the number of optimization variables by directly constructing optimization problems~\eqref{opt:PST} on $\mbs{\mu}$ and~\eqref{opt:PPS} on $\mbs{\upsilon}$. For simplicity of writing, we summarize the optimization of DPS, PST and PPS mechanisms in Table~\ref{tab:shapers_cmp} along with the differences in their design, privacy guarantees and overhead measures. It turns out that  solving for the optimal shapers are convex programs.
	
	\begin{prop} \label{prop:CVX}
		Given the set of output packet sizes $\mc{D}$ and a transmission efficiency level $\rho\in(0,1)$, the optimization problems~\eqref{opt:DPS},~\eqref{opt:PST} and~\eqref{opt:PPS} are convex programs.
	\end{prop}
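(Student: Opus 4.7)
The plan is to check, for each of the three programs, that (i) the feasible set is convex and (ii) the objective $\mbb{E}[Q_t]$ is a convex function of the decision variables; together these imply that each of $\mc{P}^\textrm{DPS}$, $\mc{P}^\textrm{PST}$, and $\mc{P}^\textrm{PPS}$ is a convex program.

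Step (i) proceeds by inspection. In $\mc{P}^\textrm{DPS}$, the transmission-efficiency constraint $\mbs{\lambda}^\top\mbf{C}\mbs{d}\cdot\rho \leq \mbs{\lambda}^\top\mbs{a}$, the stability constraint $\mbs{\lambda}^\top\mbf{C}\mbs{d} > \mbs{\lambda}^\top\mbs{a}$, the LDP constraints~\eqref{eqn:priv_constr_alt}, the box constraints $0 \preceq \mbf{C} \preceq 1$, and the row-sum constraint $\mbf{C}\mbs{1}_{m+1} = \mbs{1}_{n+1}$ are each affine in the entries of $\mbf{C}$, so the feasible set is a (relatively open) convex polyhedron. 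The identical inspection works for $\mc{P}^\textrm{PST}$ in $\mbs{\mu}$ and $\mc{P}^\textrm{PPS}$ in $\mbs{\upsilon}$, where every constraint is already written as an affine inequality or equality in the respective probability vector.

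For step (ii), I would first rewrite the objective as a functional of the distribution of the increment $X_t = A_t - D_t$. By the memoryless property of the shaper and the i.i.d.\ assumption on $A_t$, the $X_t$ form an i.i.d.\ sequence whose PMF is affine in the decision variables: for DPS, $P(X_t = k) = \sum_{(i,j): a_i - d_j = k}\lambda_i c_{ij}$, and analogously for PST and PPS. By Loynes' construction applied to the Lindley recursion~\eqref{eqn:lindley_qsz2}, $Q \stackrel{d}{=} \sup_{n \geq 0} S_n$ with $S_n = \sum_{i=1}^n X_i$, and the Wiener--Hopf factorization of the generating function of $X_t$, as used numerically in~\cite{grassmann1989numerical}, supplies $\mbb{E}[Q]$ as an explicit functional of the PMF $\mbf{p}$ of $X_t$. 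Since composition with an affine map preserves convexity, it suffices to show that $\mbf{p} \mapsto \mbb{E}[Q]$ is convex on the open set of PMFs with strictly negative mean.

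The main obstacle is precisely this last convexity. A direct attack through Spitzer's identity $\mbb{E}[Q] = \sum_{n \geq 1}\tfrac{1}{n}\mbb{E}[S_n^+]$ does not decompose term by term, since for $n \geq 2$ the Hessian of $\mbf{p} \mapsto \mbb{E}[S_n^+]$ is not positive semidefinite even on the tangent of the probability simplex. The route I would take is to work with the tail $P(Q > k)$ through the ladder-height decomposition underlying the WHF, establishing convexity level by level from the Lindley stationarity equation, and then recovering $\mbb{E}[Q] = \sum_{k \geq 0} P(Q > k)$ by monotone convergence.
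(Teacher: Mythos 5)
Your step (i) is fine and coincides with the paper's. The problem is step (ii). You correctly reduce everything to the convexity of $\mbf{p}\mapsto\mbb{E}[Q]$ on the set of increment PMFs with negative mean (the law of $Q$ depends on $\mbs{C}$, $\mbs{\mu}$ or $\mbs{\upsilon}$ only through the affine image $\mbf{p}$ of $X_t=A_t-D_t$), and you correctly observe that Spitzer's identity does not deliver this term by term. But you then stop at ``the route I would take,'' leaving the decisive lemma unproven; that is a genuine gap, and it cannot be closed, because the claim you have reduced to is false. Take $X_t\in\{-1,0,+1\}$ with PMF $\mbf{p}=(p_{-1},p_0,p_1)$ and $p_1<p_{-1}$. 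The Lindley recursion is then a reflected birth--death chain whose stationary law is geometric with ratio $r=p_1/p_{-1}$, so $\mbb{E}[Q]=r/(1-r)=p_1/(p_{-1}-p_1)$. The Hessian of $(p_1,p_{-1})\mapsto p_1/(p_{-1}-p_1)$ has determinant $-(p_{-1}-p_1)^{-4}<0$, hence is indefinite everywhere, and since $p_0$ does not enter, the negative-curvature direction lies in the tangent space of the simplex. Concretely, $\mbb{E}[Q]=2/3$ at $\mbf{p}=(0.5,0.3,0.2)$, while the average of $\mbb{E}[Q]$ at $(0.44,0.41,0.15)$ and $(0.56,0.19,0.25)$ is $\tfrac{1}{2}(0.15/0.29+0.25/0.31)\approx 0.662<2/3$. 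Since $\mbb{E}[Q_t]$ increases to the stationary value as $t\to\infty$ (Loynes), convexity for all $t$ would force convexity of the limit, so the finite-horizon objective is non-convex as well. This increment law is realizable strictly inside the feasible set of~\eqref{opt:DPS} (e.g.\ $\mc{A}=\mc{D}=\{0,1\}$, $\mbs{\lambda}=(0.7,0.3)$, varying $c_{01}$ and $c_{10}$ along a feasible segment with large $\epsilon$ and small $\rho$), so the constraints do not rescue you.

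For comparison, the paper's proof takes a different route: it asserts that $\{X_t(\mbs{C})\}$ satisfies strong stochastic convexity in $\mbs{C}$ and pushes this through the Lindley recursion using closure of SSCX under increasing convex maps. That route has the same flaw in a different place: SSCX is a sample-path notion requiring a coupling under which $\hat{X}_t(\mbs{C})(\omega)$ is convex in $\mbs{C}$, and no such coupling exists for a non-degenerate mixture family supported on a fixed finite set --- a convex function of the parameter taking finitely many values must be constant on the interior of its domain, which would force the law to be constant in $\mbs{C}$. What the mixture structure actually gives is stochastic \emph{linearity} of the one-dimensional marginals ($\mbb{E}[f(X_t)]$ affine in $\mbs{C}$ for each $f$), which is too weak to survive the $\max(\cdot,0)$ and summation in~\eqref{eqn:lindley_qsz2}. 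So neither your ladder-height plan nor the paper's SSCX argument establishes the proposition; if you want a salvageable statement, you would need to restrict to a sub-family along which $\mbb{E}[Q]$ happens to be convex (the one-parameter PST family in the example above is such a case) or replace the objective by a genuinely convex surrogate.
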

	\begin{proof}
		$\mbb{E}[Q_t]$ is convex in $\mbs{C}$, $\mbs{\mu}$ or $\mbs{\upsilon}$ and all constraints in~\eqref{opt:DPS},~\eqref{opt:PST} and~\eqref{opt:PPS} are affine. See Appendix~\ref{appdix:CVX}.
	\end{proof}
	
	\subsection{Privacy-Overhead Tradeoff} \label{sec:POT}
	We can obtain the minimum achievable $\mbb{E}[Q_t]$ for varying privacy parameters $(\epsilon_s, \epsilon_t)$ and transmission efficiency level $\rho$, hence establishing the privacy-overhead tradeoff. The following theorem provides an important qualitative description of the privacy-overhead tradeoff, and its validity will be shown in Section~\ref{sec:EXP} under diverse experimental setups.
	
	\begin{theorem} \label{theorem:POT} 
		Given the set of output packet sizes $\mc{D}$ and transmission efficiency level $\rho\in[\mbs{\lambda}^\top\mbs{a}/d_n, 1)$, the minimum achievable $\mbb{E}[Q_t]$ in~\eqref{opt:DPS} increases when i) $(\epsilon_s,\epsilon_t)$ decreases with $\rho$ fixed and ii) $\rho$ increases with $(\epsilon_s,\epsilon_t)$ fixed.
	\end{theorem}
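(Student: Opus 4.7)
The plan is to prove both statements by the simple observation that the objective $\mbb{E}[Q_t]$ in~\eqref{opt:DPS} depends on the parameters $(\epsilon_s,\epsilon_t,\rho)$ only through the feasible set they impose on $\mbs{C}$: the parameters change what channel matrices are allowed, but not the functional dependence of the expected queue size on a fixed $\mbs{C}$. Denoting the feasible region by $\Omega(\epsilon_s,\epsilon_t,\rho)$ and the optimal value by $v^*(\epsilon_s,\epsilon_t,\rho)=\min_{\mbs{C}\in\Omega(\epsilon_s,\epsilon_t,\rho)}\mbb{E}[Q_t]$, it therefore suffices to show that $\Omega(\cdot)$ shrinks monotonically in the directions claimed, since minimizing the same function over a subset yields a value no smaller than minimizing over the superset.

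For part (i), I would fix $\rho$ and suppose $\epsilon_s'\leq\epsilon_s$ and $\epsilon_t'\leq\epsilon_t$. Each LDP constraint in~\eqref{eqn:priv_constr_alt} has the form $c_{ij}-e^\gamma c_{kj}\leq 0$ with $\gamma\in\{\epsilon_s,\epsilon_t/2\}$, and because $e^\gamma$ is strictly increasing in $\gamma$, reducing $\gamma$ tightens every such constraint. The remaining constraints (efficiency, stability, non-negativity, row-stochasticity) do not depend on the privacy parameters. Hence $\Omega(\epsilon_s',\epsilon_t',\rho)\subseteq\Omega(\epsilon_s,\epsilon_t,\rho)$ and $v^*(\epsilon_s',\epsilon_t',\rho)\geq v^*(\epsilon_s,\epsilon_t,\rho)$.

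For part (ii), I would fix $(\epsilon_s,\epsilon_t)$ and note that the only constraint involving $\rho$ is the efficiency constraint, which rewrites as $\mbs{\lambda}^\top\mbs{C}\mbs{d}\leq\mbs{\lambda}^\top\mbs{a}/\rho$. Increasing $\rho$ lowers this upper bound on the output byte rate, so $\rho'\geq\rho$ gives $\Omega(\epsilon_s,\epsilon_t,\rho')\subseteq\Omega(\epsilon_s,\epsilon_t,\rho)$ and hence $v^*(\epsilon_s,\epsilon_t,\rho')\geq v^*(\epsilon_s,\epsilon_t,\rho)$. The lower bound $\rho\geq\mbs{\lambda}^\top\mbs{a}/d_n$ is precisely the threshold below which the efficiency constraint becomes vacuous: since $\mbs{C}$ is row-stochastic and $d_n$ is the largest output size, $\mbs{\lambda}^\top\mbs{C}\mbs{d}\leq d_n$ always, so the upper bound $\mbs{\lambda}^\top\mbs{a}/\rho$ exceeds $d_n$ whenever $\rho<\mbs{\lambda}^\top\mbs{a}/d_n$; in that regime $\Omega$ (and therefore $v^*$) is independent of $\rho$. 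Restricting to $\rho\geq\mbs{\lambda}^\top\mbs{a}/d_n$ isolates the regime in which the constraint is active and the claimed monotonicity is meaningful.

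The main obstacle is conceptual rather than technical: one might be tempted to attack this quantitatively by analyzing how the Wiener-Hopf approximation of $\mbb{E}[Q_t]$ varies with $\mbs{C}$, which is intricate. Recognizing instead that the statement is purely about monotonicity of the optimal value of a parametric convex program under monotone shrinkage of its feasible set short-circuits that complexity, and the entire proof reduces to inspecting each constraint in~\eqref{opt:DPS} and carefully delineating the degenerate regime excluded by the lower bound on $\rho$.
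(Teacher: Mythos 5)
Your proof is correct, but it takes a genuinely different route from the paper's. The paper proves Theorem~\ref{theorem:POT} by first establishing strong duality for~\eqref{opt:DPS} (exhibiting a feasible rank-one channel $\mbs{C}=\mbf{1}\cdot\mbs{\mu}^\top$ with $\mbs{\mu}^\top\mbs{d}=\mbs{\lambda}^\top\mbs{a}/\rho$, which is where the lower bound $\rho\geq\mbs{\lambda}^\top\mbs{a}/d_n$ enters for them) and then running a Lagrangian sensitivity analysis: it bounds the optimal value of the perturbed problem below by $p^*(0,0)$ plus correction terms $-\sum_r\beta^*_r e^\epsilon(e^{\Delta\epsilon}-1)+\alpha^*\mbs{\lambda}^\top\mbs{a}\cdot\Delta\rho$ involving the optimal dual variables, and reads off the sign of each term. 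Your argument replaces all of this with the elementary observation that decreasing $(\epsilon_s,\epsilon_t)$ or increasing $\rho$ only tightens constraints, so the feasible set is nested and the optimal value of the same objective can only go up. This is shorter, requires neither convexity (Proposition~\ref{prop:CVX}) nor strong duality, and establishes exactly the weak monotonicity that the theorem actually asserts; your reading of the threshold $\mbs{\lambda}^\top\mbs{a}/d_n$ as the point below which the efficiency constraint is vacuous for any row-stochastic $\mbs{C}$ is also correct. What the paper's heavier machinery buys in exchange is quantitative: the dual-variable bound is a first-order estimate of \emph{how much} $\mbb{E}^*[Q_t]$ must grow per unit perturbation, which a pure set-inclusion argument cannot deliver. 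Two small points you should still add for completeness: (a) for ``the minimum achievable $\mbb{E}[Q_t]$'' to be well defined you need $\Omega(\epsilon_s,\epsilon_t,\rho)\neq\emptyset$ over the stated parameter range, which the paper's rank-one construction supplies and your write-up omits; (b) when you drop below the threshold $\rho<\mbs{\lambda}^\top\mbs{a}/d_n$ the efficiency constraint is inactive but the stability constraint $\mbs{\lambda}^\top\mbs{C}\mbs{d}>\mbs{\lambda}^\top\mbs{a}$ still binds, so ``$\Omega$ independent of $\rho$'' should be read as independence from the efficiency constraint only. Neither point affects the validity of your monotonicity argument.
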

	\begin{proof}
		This follows from Proposition~\ref{prop:CVX} and standard results based on strong duality and sensitivity analysis~\cite{boyd2004convex}. See Appendix~\ref{appdix:POT} for a detailed proof.
	\end{proof}
	
	\section{Special Policies}
	
	\subsection{Deterministic Policy for PST and PPS} \label{sec:deterministic_policy}
	A folk theorem in queueing theory states that the deterministic service-time distribution with unit mass on a given mean service time minimizes delay when service and inter-arrival times are mutually independent. Humblet~\cite{humblet1982determinism} proved this formally using Jensen's inequality, the convexity of the $\max(x,0)$ function in the Lindley's equation of customer waiting times in a FCFS queue, and the independence between inter-arrival and service times. 
	
	In a similar fashion, we enforce a FCFS queueing discipline for the shapers and describe the evolution of the queue size by the discrete-time Lindley's equation~\eqref{eqn:lindley_qsz} in terms of arrival ($A_t$) and departure ($D_t$) sizes. By analogy, we can make the same statement for the design of shaping mechanisms: when departure sizes $D_t$ are chosen independently from arrival sizes $A_t$, then for a given transmission efficiency level $\rho$ or equivalently an expected output packet size $d^*\triangleq\mbb{E}_\lambda[A_t]/\rho$, the deterministic policy that outputs packets with constant size $D_t=d^*, \forall t$ minimizes the expected queue size $\mbb{E}[Q_t]$. As $A^T$ are i.i.d. across time, PST chooses $D^T$ independently from $A^T$, and PPS selects $D_{t\in\mc{I}}$ independently from $A_{t\in\mc{I}}$\footnote{Lindley's equation~\eqref{eqn:lindley_qsz} applies to the subsequence of queue sizes $Q_{t\in\mc{I}}$ for which $D_{t\in\mc{I}}$ and $A_{t\in\mc{I}}$ are independent and Humblet's result~\cite{humblet1982determinism} still holds.}, enforcing the deterministic policy on PST or PPS should on average accumulate a shorter backlog in the queue than their non-deterministic counterparts.
	
	Let PST* and PPS* be the deterministic versions of PST and PPS mechanisms given $\rho$. That is, PST* generates $D_{t\in\mc{T}}=d^*$ and is in essence the discrete-time version of the traffic shaper by Apthorpe et al.~\cite{apthorpe2017spying} that maintains a constant departure rate in the network traffic leaving a smart home. Likewise, PPS* outputs $D_{t\in\mc{I}}=d^*/\Lambda$. Based on the above statement, if $d^*,d^*/\Lambda\in\mc{D}$, then the optimal solutions to~\eqref{opt:PST} and~\eqref{opt:PPS} are exactly delta distributions: $\mu^*(d)=\delta(d^*)$ and $\upsilon^*(d)=\delta(d^*/\Lambda)$. In real systems, however, $\rho$ may be set arbitrarily by resource-constrained users and the resulting $d^*, d^*/\Lambda$ may not be meaningful values for packet sizes (e.g., not an integer or exceeding the maximum transmission unit).
	
	\begin{figure}[t]
		\centering
		\includegraphics[width=0.725\linewidth]{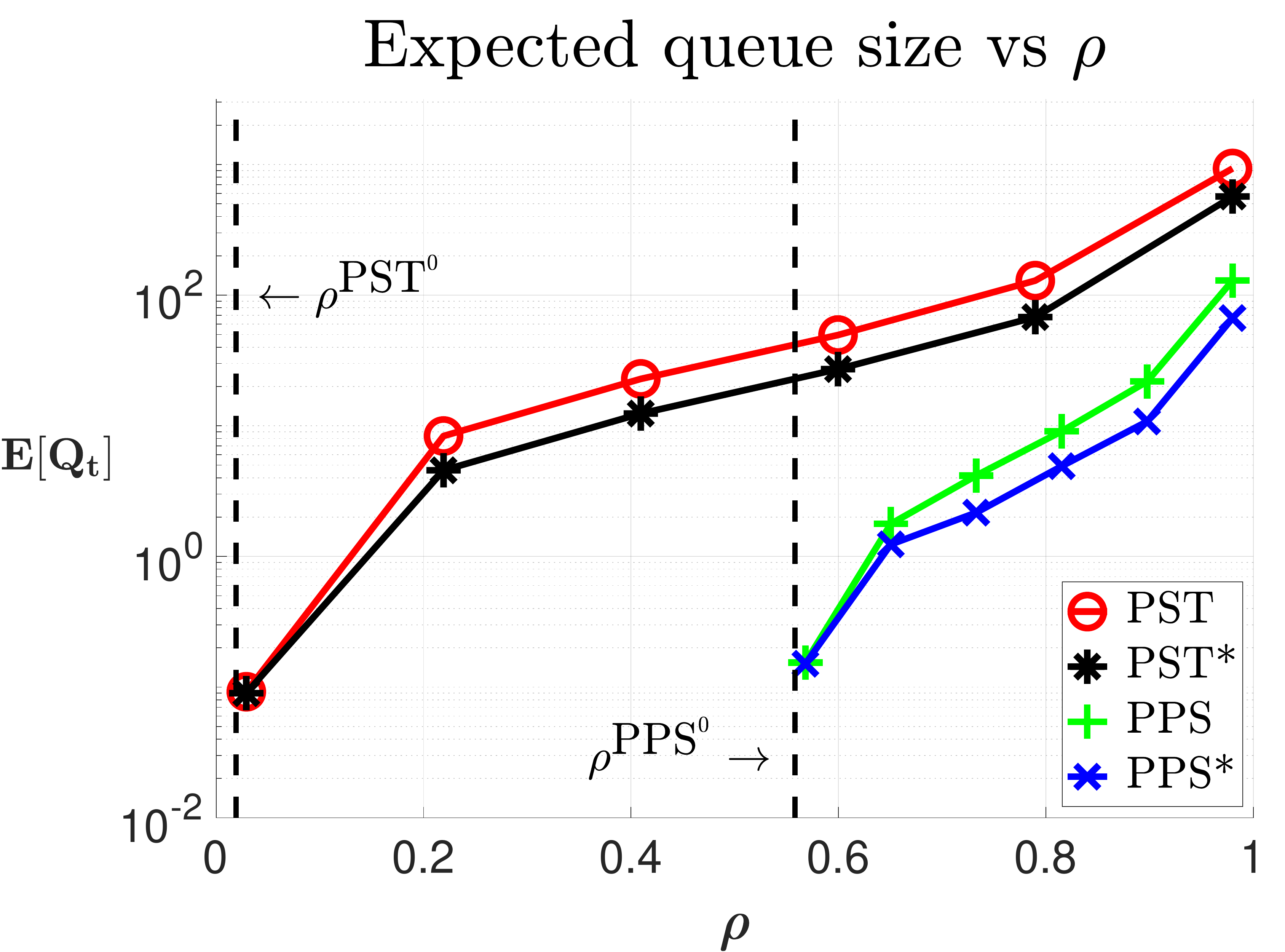}
		\caption{Tradeoffs between expected queue size $\mbb{E}[Q_t]$ and transmission efficiency level $\rho$ for PST/PST* and PPS/PPS* mechanisms. The lowest efficiency levels $\rho^{\textrm{PST}^0}$~\eqref{eqn:efficiency_pst_pad} and $\rho^{\textrm{PPS}^0}$~\eqref{eqn:efficiency_pps_pad} correspond to the PST and PPS shapers with pad-only strategy, both yielding $\mbb{E}[Q_t]=0$. The corresponding data points are not shown in the plot since the y-axis is in log scale.}
		\label{fig:tradeoff_VSCI_VSVI}
	\end{figure}
	
	\def\subfigwd{.31\linewidth}
	\def\subfight{.75\linewidth}
	\begin{figure*}
		\centering
		\begin{subfigure}{\subfigwd}
			\centering
			\includegraphics[width=\linewidth,height=\subfight]{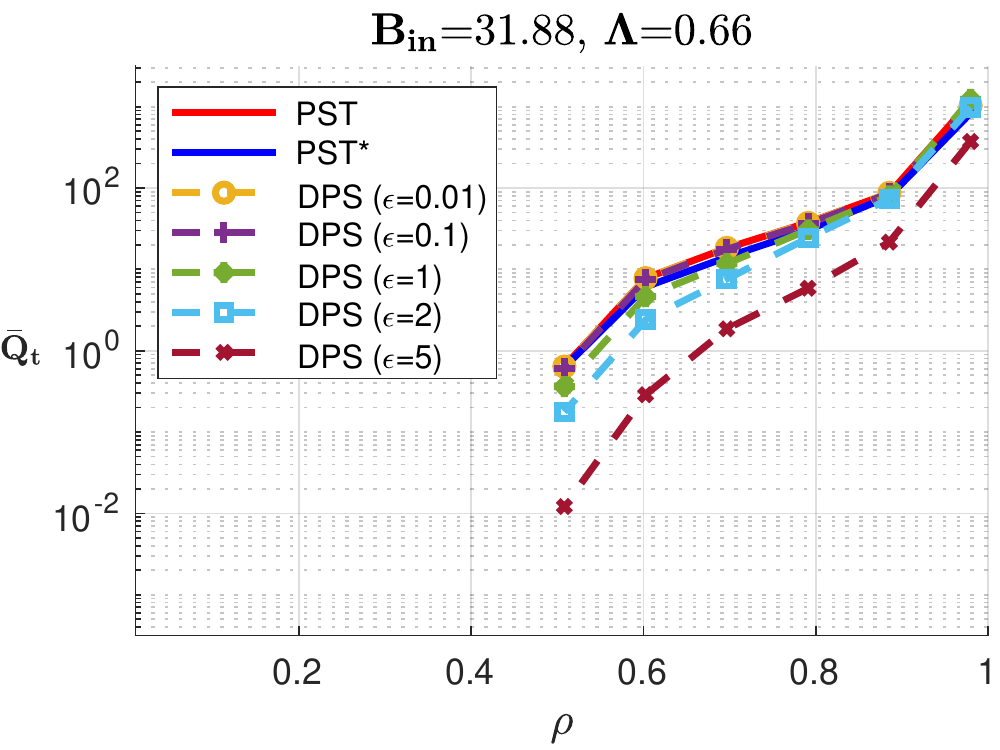}
			\caption{$s = 0.01$.}
			\label{fig:qsz_dps_vs_vsci_s_0_01}
		\end{subfigure} 
		\begin{subfigure}{\subfigwd}
			\centering
			\includegraphics[width=\linewidth,height=\subfight]{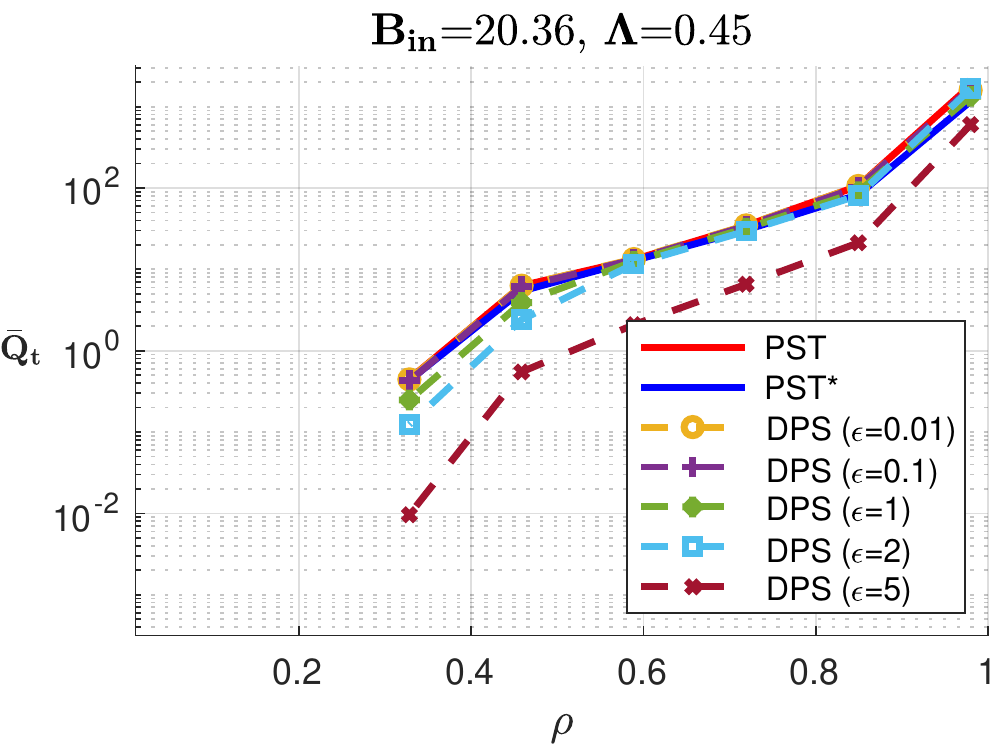}
			\caption{$s = 1$.}
			\label{fig:qsz_dps_vs_vsci_s_1}
		\end{subfigure}
		\begin{subfigure}{\subfigwd}
			\centering
			\includegraphics[width=\linewidth,height=\subfight]{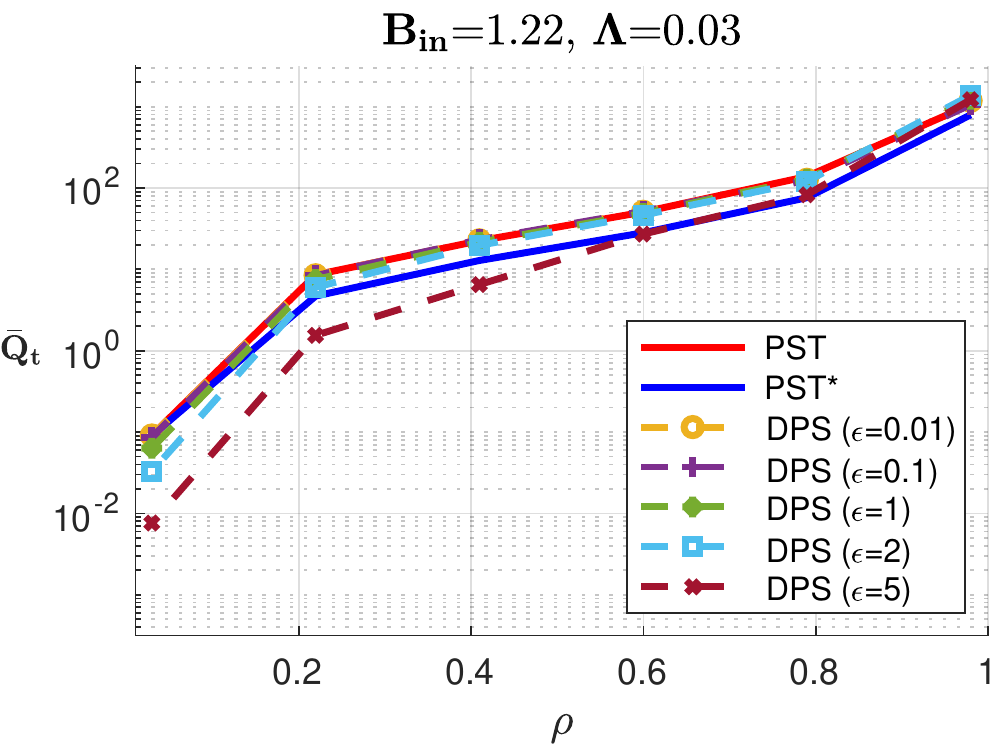}
			\caption{$s = 5$.}
			\label{fig:qsz_dps_vs_vsci_s_5}
		\end{subfigure}\\
		\begin{subfigure}{\subfigwd}
			\centering
			\includegraphics[width=\linewidth,height=\subfight]{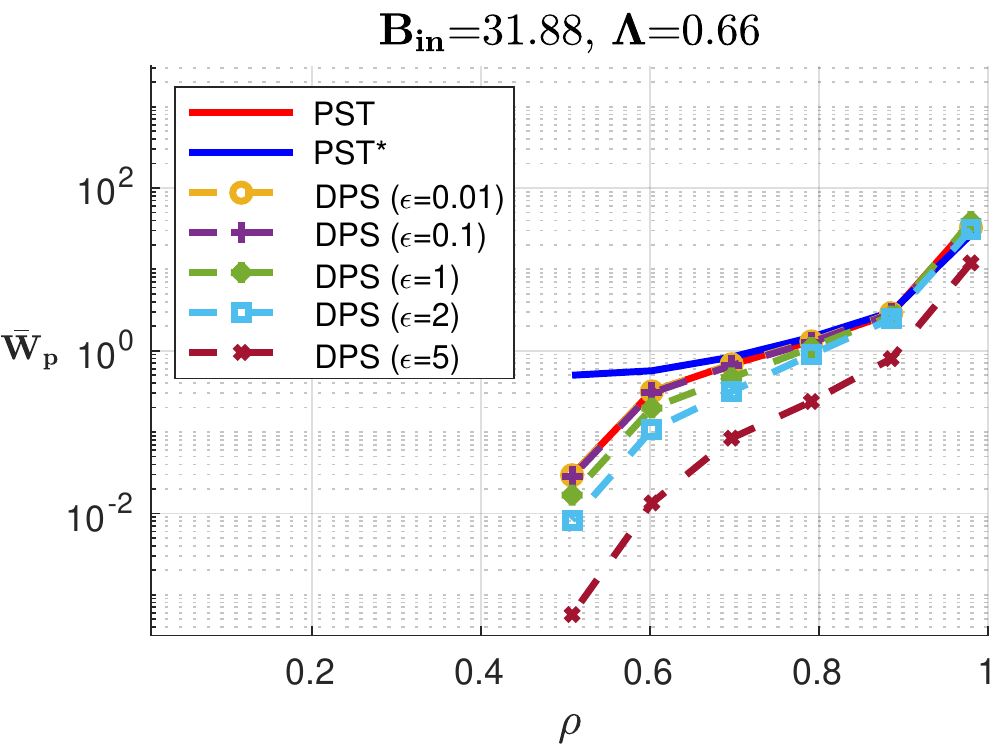}
			\caption{$s = 0.01$.}
			\label{fig:delay_dps_vs_vsci_s_0_01}
		\end{subfigure}
		\begin{subfigure}{\subfigwd}
			\centering
			\includegraphics[width=\linewidth,height=\subfight]{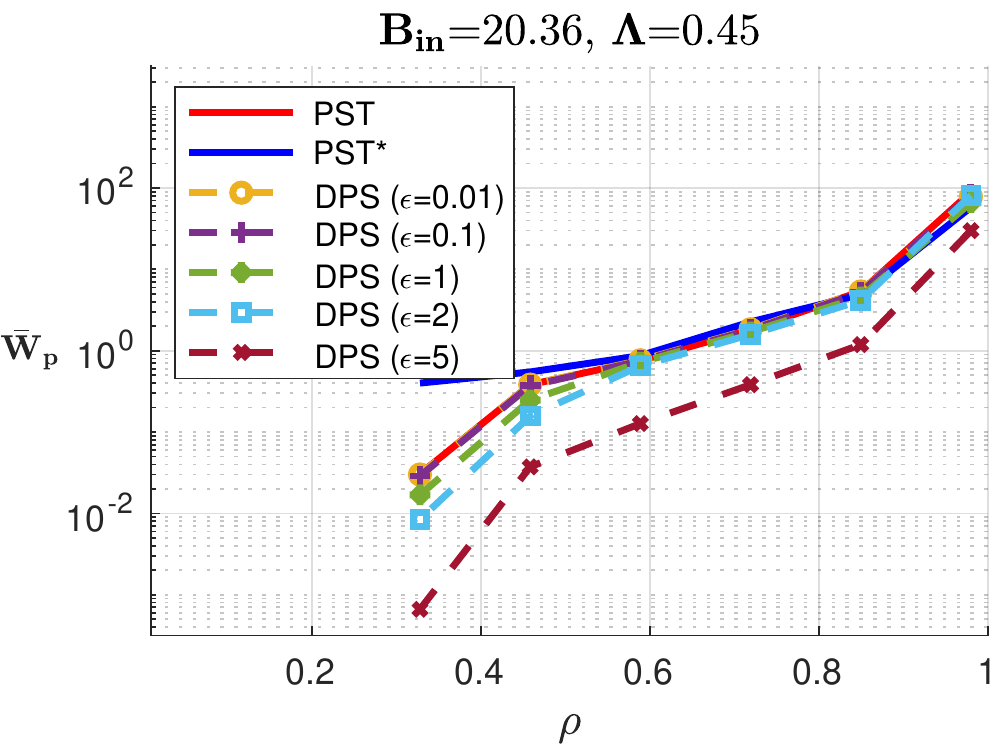}
			\caption{$s = 1$.}
			\label{fig:delay_dps_vs_vsci_s_1}
		\end{subfigure}
		\begin{subfigure}{\subfigwd}
			\centering
			\includegraphics[width=\linewidth,height=\subfight]{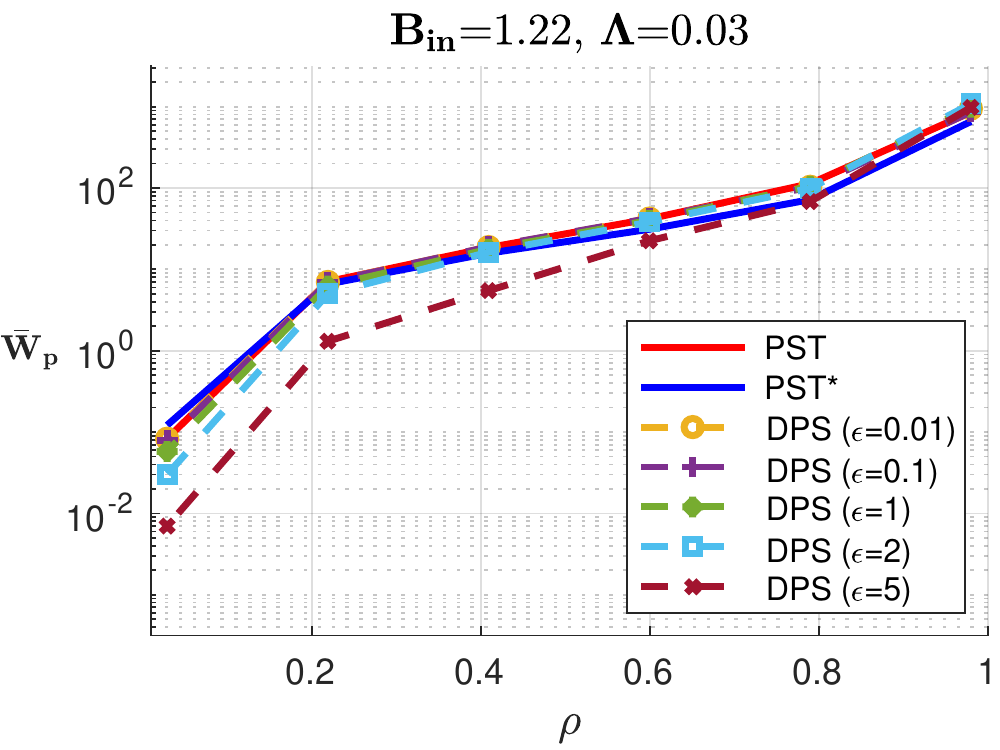}
			\caption{$s = 5$.}
			\label{fig:delay_dps_vs_vsci_s_5}
		\end{subfigure}
		\caption{$\bar{Q}_t/\bar{W}_p-\rho-\epsilon$ tradeoffs ($(\epsilon,\epsilon)$-DPS vs. PST/PST*) under Zipf PMFs with different scale parameters $s$.}
		\label{fig:dps_vs_vsci}
	\end{figure*}
	
	\begin{figure*}
		\centering
		\begin{subfigure}{\subfigwd}
			\centering
			\includegraphics[width=\linewidth,height=\subfight]{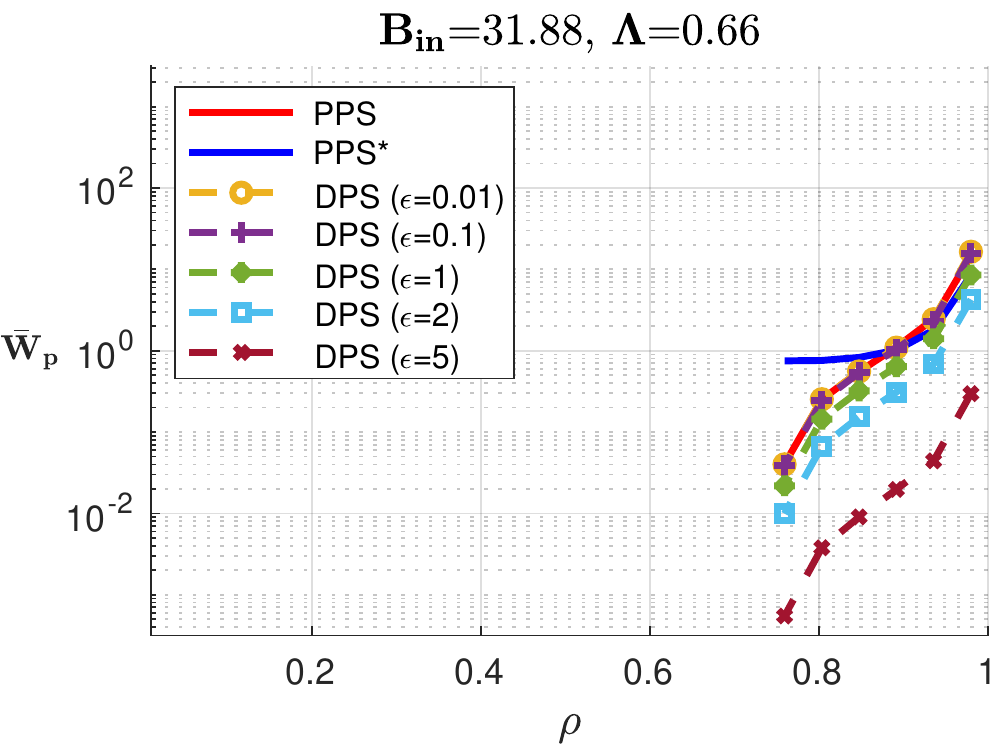}
			\caption{$s = 0.01$.}
			\label{fig:delay_dps_vs_vsvi_s_0_01}
		\end{subfigure}
		\begin{subfigure}{\subfigwd}
			\centering
			\includegraphics[width=\linewidth,height=\subfight]{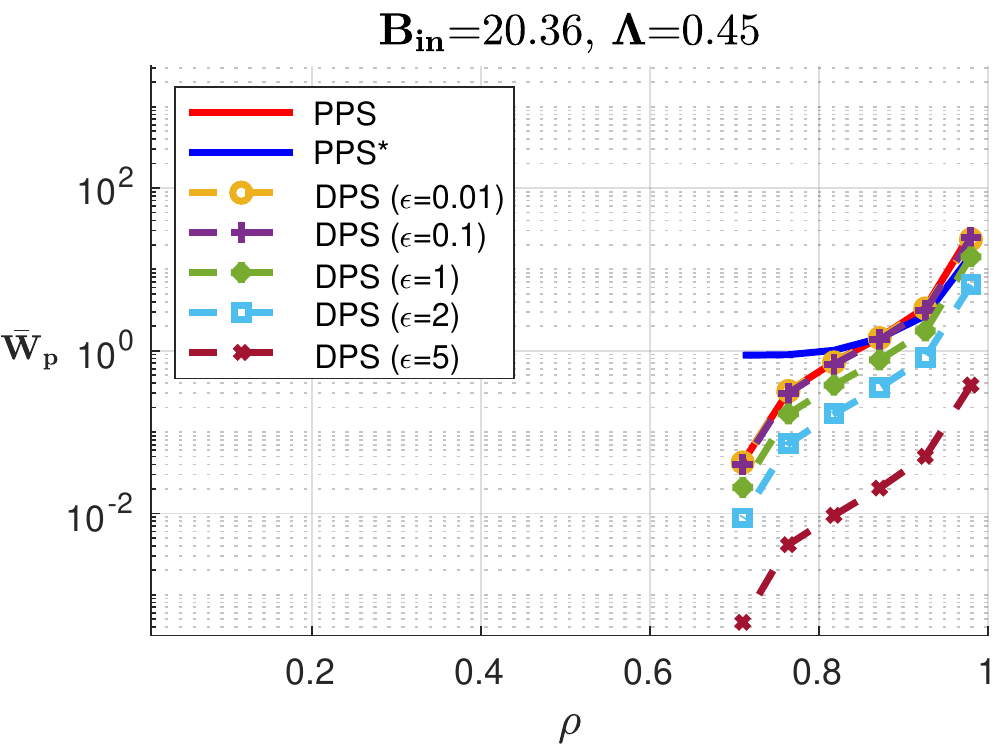}
			\caption{$s = 1$.}
			\label{fig:delay_dps_vs_vsvi_s_1}
		\end{subfigure}
		\begin{subfigure}{\subfigwd}
			\centering
			\includegraphics[width=\linewidth,height=\subfight]{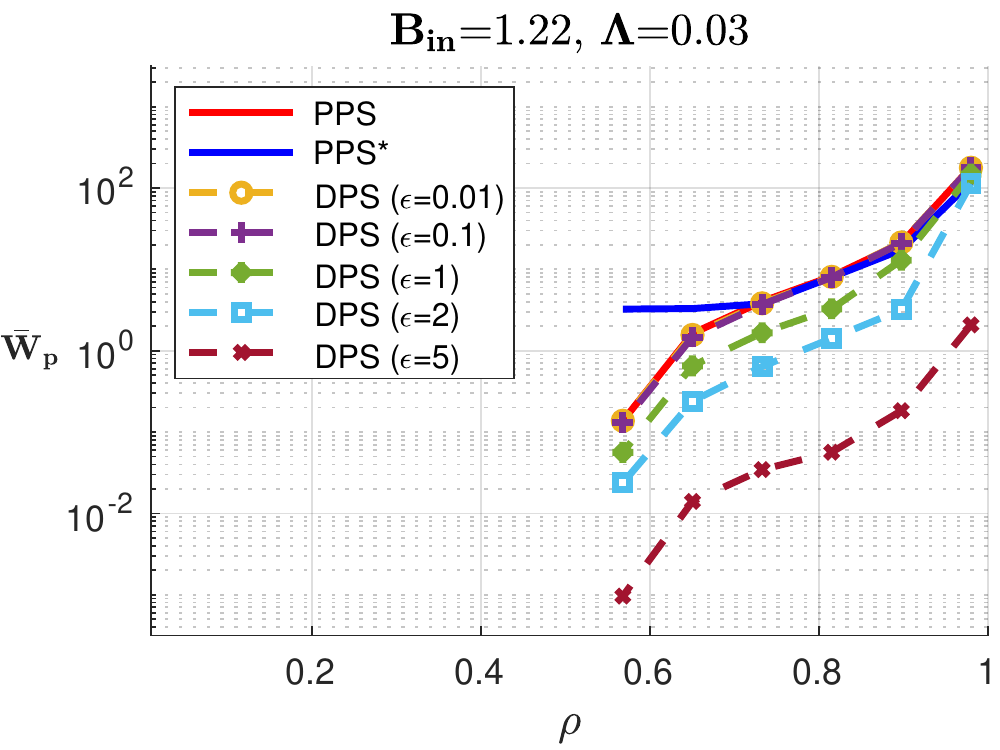}
			\caption{$s = 5$.}
			\label{fig:delay_dps_vs_vsvi_s_5}
		\end{subfigure}
		\caption{$\bar{W}_p-\rho-\epsilon$ tradeoffs ($(\epsilon,\infty)$-DPS vs. PPS/PPS*) under Zipf PMFs with different scale parameters $s$.}
		\label{fig:dps_vs_vsvi}
	\end{figure*}
	
	In Section~\ref{sec:EXP}, we will evaluate the effects of both PST and PST* (PPS and PPS*) mechanisms as baseline approaches to guaranteeing perfect event-level privacy (\emph{resp.} perfect event-level packet-size privacy). In the perfect-privacy regime, there naturally exists a tradeoff between the transmission efficiency level $\rho$ and the expected queue size $\mbb{E}[Q_t]$. Here, we show such tradeoffs for PST/PST* and PPS/PPS* mechanisms in Fig.~\ref{fig:tradeoff_VSCI_VSVI} for one of the experimental setups from Section~\ref{sec:EXP}. We make the following observations,
	\begin{itemize}
		\item More dummy traffic helps deplete the queue: decreasing $\rho$ leads to decreasing $\mbb{E}[Q_t]$. 
		\item The deterministic policies PST*/PPS* (black asterisk/blue cross lines) indeed result in smaller average queue sizes than the non-deterministic policies PST/PPS (red circle/green plus lines).
		\item PPS/PPS* achieve a less restrictive privacy guarantee than PST/PST* by introducing some dependency between the input and output. This yields significant savings on both the delay (smaller queue $\mbb{E}[Q_t]$) and byte rate overhead (higher efficiency $\rho$).
	\end{itemize}
	
	As the DPS mechanism interpolates between the perfect-privacy PST and PPS shapers and the non-private shaper (e.g., with an identity channel matrix), we are interested in how its privacy-overhead tradeoffs compare to those of the baseline approaches, and will show the comparisons in the sequel.
	
	\def\subfigwd{.31\linewidth}
	\def\subfight{.75\linewidth}
	\begin{figure*}
		\centering
		\begin{subfigure}{\subfigwd}
			\centering
			\includegraphics[width=\linewidth,height=\subfight]{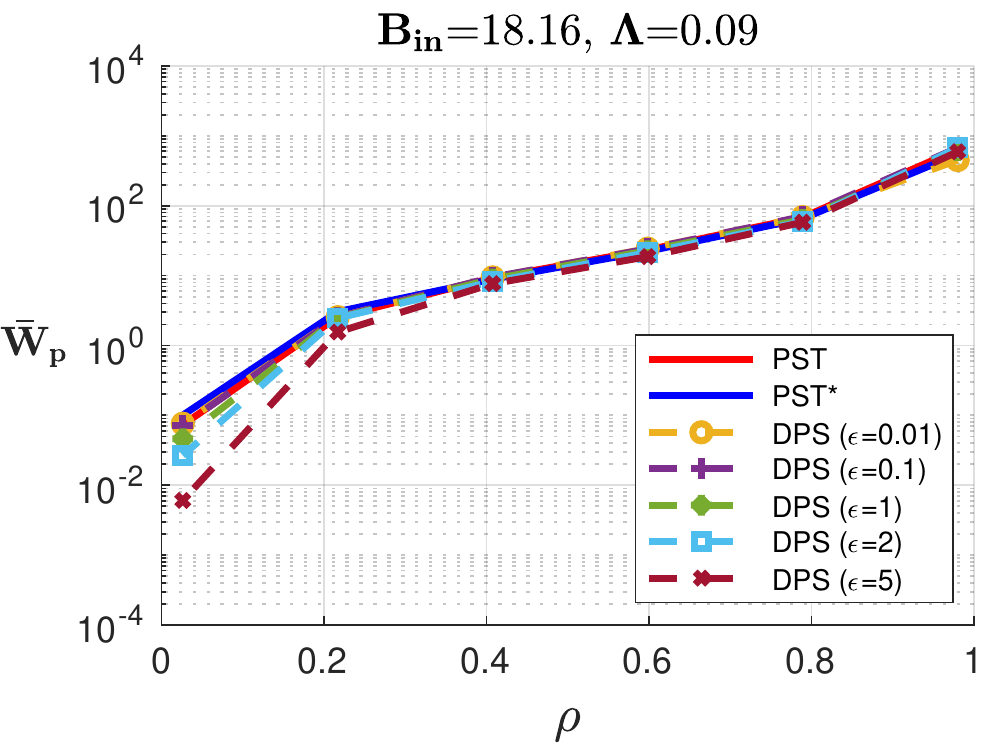}
			\caption{Sleep.}
			\label{fig:sleep}
		\end{subfigure}
		\begin{subfigure}{\subfigwd}
			\centering
			\includegraphics[width=\linewidth,height=\subfight]{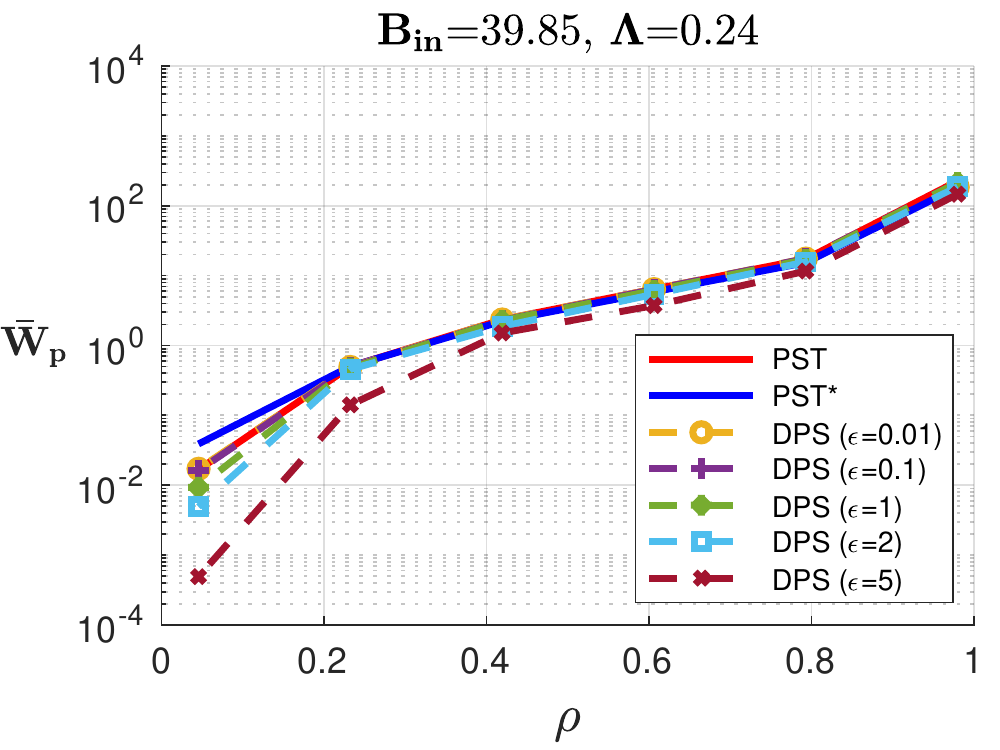}
			\caption{Sleep $+$ Camera.}
			\label{fig:cam}
		\end{subfigure}
		\begin{subfigure}{\subfigwd}
			\centering
			\includegraphics[width=\linewidth,height=\subfight]{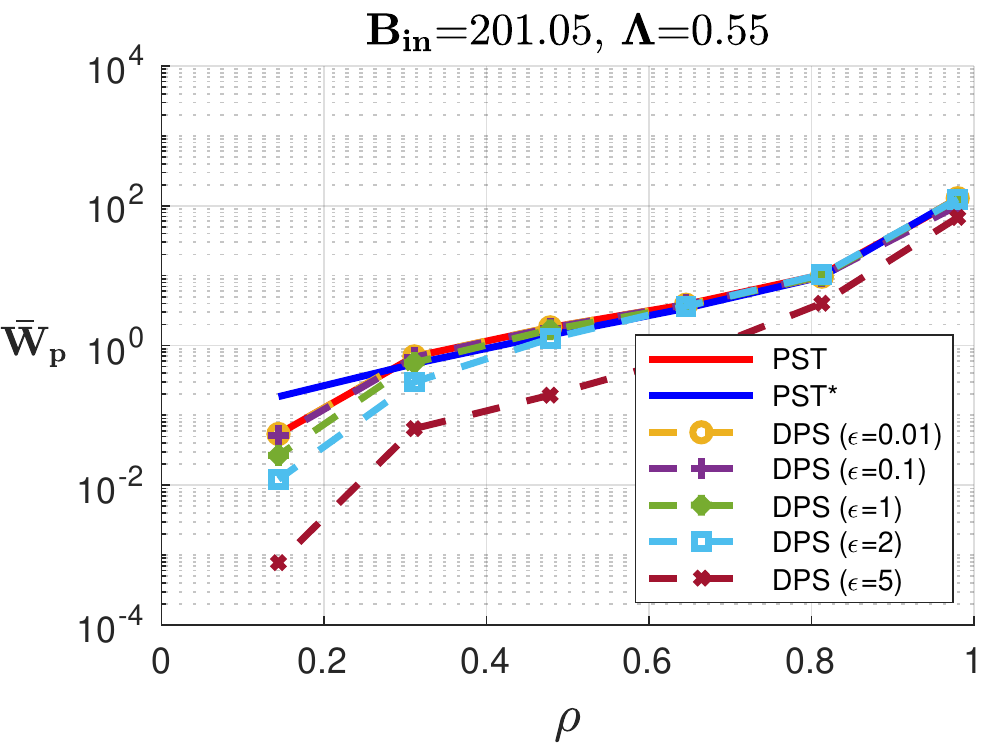}
			\caption{Sleep $+$ Camera $+$ Switch.}
			\label{fig:merge}
		\end{subfigure}
		\caption{Effect of increasing number of IoT devices on the $\bar{W}_p-\rho-\epsilon$ tradeoffs ($(\epsilon,\epsilon)$-DPS vs. PST/PST*).}
		\label{fig:merge_devices}
	\end{figure*}
	
	\subsection{Pad-Only Policy}
	For extremely low-latency networks and delay-sensitive traffic (e.g., smart healthcare devices monitoring the health conditions of users should have timely communication with the intended application servers), we prefer \emph{zero-delay} shaping mechanisms to protect privacy by adopting the \emph{pad-only} policy: $D_t\geq A_t, \forall t$. Here, we take a closer look at the effect of enforcing the pad-only policy on the PST, PPS and DPS mechanisms, and denote them as PST$^0$, PPS$^0$ and DPS$^0$, respectively, with the superscript marking $0$ delay.
	
	\subsubsection{PST$^0$ \& PPS$^0$} \label{sec:pad_only_pst_pps}
	It's easy to see that the PST$^0$ and PPS$^0$ mechanisms essentially output packets with the largest size $D_t=a_n$ for $\forall t\in\mc{T}$ and $\forall t\in\mc{I}$, respectively. That is, $\mu(d)=\upsilon(d)=\delta(a_n)$. As a result, their transmission efficiency levels in~\eqref{eqn:efficiency_pst} and~\eqref{eqn:efficiency_pps} change to,
	\begin{align}
		\rho^\textrm{PST$^0$}&=\mbs{\lambda}^\top\mbs{a}/a_n, \label{eqn:efficiency_pst_pad} \\
		\rho^\textrm{PPS$^0$}&=\mbs{\lambda}^\top\mbs{a}/a_n\cdot\frac{1}{\Lambda}. \label{eqn:efficiency_pps_pad}
	\end{align}
	
	We see that the PST$^0$ shaper is very transmission inefficient, and more so when the incoming traffic is a mouse flow (i.e., the arrival rate of event packets $\Lambda=\sum_{i\in N^+}\lambda_i$ is small which lowers the numerator and hence $\rho^\textrm{PST$^0$}$) than an elephant flow. This is intuitive since a mouse flow represents a scenario where private user activities only happen sporadically across time (contrary to an elephant flow) and presents more ``variability'' in the network traffic. An adversary can exploit the variability for more effective inference attacks. Shaping a mouse flow to guarantee perfect event-level privacy then becomes more costly in terms of dummy traffic. 
	
	The same is not necessarily true for the PPS$^0$ shaper since it preserves the timing information and the demand for dummy traffic is less affected by $\Lambda$ (a small $\Lambda$ reduces both the numerator and denominator in~\eqref{eqn:efficiency_pps_pad} where $\rho^\textrm{PPS$^0$}$ can increase or decrease). We will empirically validate this phenomenon in Section~\ref{sec:EXP} and show that it is generally true for privacy-preserving shapers with or without the pad-only policy. 
	
	\subsubsection{DPS$^0$}
	
	Since the input and output alphabets $\mc{A}$ and $\mc{D}$ are ordered in increasing sizes, enforcing the pad-only policy on the DPS mechanism means adding a structural constraint on the channel matrix $\mbs{C}$, that we only allow non-zero entries in the ``upper triangle'' (assuming $\mc{D}=\mc{A}$ w.l.o.g.),
	\begin{align}
		c_{ij}=c(d_j|a_i) = 0,\ \forall d_j < a_i. \label{eqn:padding}
	\end{align}
	Note that $\max\left(\epsilon_s, \frac{\epsilon_t}{2}\right)$-LDP bounds the pair-wise ratios in each column of $\mbs{C}$ by either $\epsilon_s$ or $\frac{\epsilon_t}{2}$ according to~\eqref{eqn:priv_constr_alt}. For finite $\epsilon_s,\epsilon_t<\infty$, if any entry in a column is 0, then that whole column has to be 0, otherwise $\epsilon_s, \epsilon_t$ become $\infty$. Along with the structural and right stochastic constraints, the channel matrix $\mbs{C}$ for a pad-only $(\epsilon_s,\epsilon_t)$-DP shaper can only have all ones in the last column, pushing $\epsilon_s,\epsilon_t$ to be 0. The $(0,0)$-DPS$^0$ mechanism then becomes identical to the PST$^0$ mechanism. By analogy, the $(0,\infty)$-DPS$^0$ mechanism is equivalent to the PPS$^0$ shaper. The DPS mechanism encapsulates both PST and PPS mechanisms with or without the pad-only policy.
	
	In the following section, we will evaluate the privacy-overhead tradeoffs of DPS, PST/PST*, PPS/PPS* mechanisms on different types of traffic under various settings of privacy parameters $(\epsilon_s,\epsilon_t)$ and transmission efficiency level $\rho$. We need not explore values of $\rho$ lower than $\rho^{\textrm{PST}^0}$ or $\rho^{\textrm{PPS}^0}$ which already yield 0 delay for the PST or PPS shaper.
	
	\section{Experimental Results}\label{sec:EXP}
	
	We experiment\footnote{The simulation code is posted at \href{https://github.com/sijie-xiong/PERMIT}{https://github.com/sijie-xiong/PERMIT}.} on synthetic data and packet traces from 3 smart home IoT devices (Sense Sleep monitor, Nest camera and WeMo switch)~\cite{apthorpe2017smart}. For synthetic data, we simulate packet traces with sizes drawn i.i.d. from Zipf distribution with PMF: $P_\mathrm{Zipf}(k;s,E)=(1/k^s)/\sum_{e=1}^E (1/e)^s$, which characterizes the frequency of rank-$k$ element out of a population of $E$ elements. We assume that packet size $a_i$ has rank $i+1, \forall i\in N$. We choose the exponent, or the scale parameter $s\in[0.01, 1, 5]$ for Zipf PMF, and set the possible packet sizes to be $\mc{A}=[0,32,64]$. In this work, we assume $\mc{D}=\mc{A}$ and leave for future work the design and optimization of $\mc{D}$. For IoT devices, we discretize the packet traces into 1s time slots and keep only the event packet sizes (e.g., 270B and 142B packets from the Nest camera triggered by motion detection and checking camera feed, respectively). We then extract packet size PMFs\footnote{That $\mc{A}_\textrm{Sleep}=[0, 93, 1117]$, $\lambda_\textrm{Sleep}=[0.91, 0.08, 0.01]$; $\mc{A}_\textrm{Camera}=[0, 142, 270]$, $\lambda_\textrm{Camera}=[0.85, 0.14 , 0.01]$; and $\mc{A}_\textrm{Switch}=[0, 40, 1500]$, $\lambda_\textrm{Switch}=[0.69, 0.21, 0.1]$.} from the preprocessed traces for optimizing the shapers.
	
	By setting different values of the scale parameter $s$ in Zipf PMF, we are essentially synthesizing packet streams ranging from mouse to elephant flows. For smaller $s$, the Zipf PMF has a heavier tail, putting more probability mass on larger packets, thus creating heavier input traffic that exemplifies an elephant flow (higher $\Lambda=1-\lambda_0$). An elephant flow with small $s$ also identifies increased traffic from a larger number of IoT devices, and vice versa.
	
	\subsection{Empirical Privacy-Overhead Tradeoffs}
	In Section~\ref{sec:POT}, we formally characterized the privacy-overhead tradeoff in Theorem~\ref{theorem:POT} between the minimum achievable $\mbb{E}[Q_t]$, transmission efficiency $\rho$ and privacy levels $(\epsilon_s,\epsilon_t)$. To validate their relationship, as well as compare the empirical privacy-overhead tradeoffs of different shapers, we will solve~\eqref{opt:DPS},~\eqref{opt:PST} and~\eqref{opt:PPS} under varying $\rho$, $(\epsilon_s,\epsilon_t)$ and packet size PMFs to find the optimal distributions $\mbs{C}^*$, $\mbs{\mu}^*$ and $\mbs{\upsilon}^*$. We then use discrete-event simulation~\cite{matloff2008introduction} to calculate the empirical overhead measures $\bar{W}_p$~\eqref{eqn:avg_delay_per_pkt} and $\bar{Q}_t$~\eqref{eqn:avg_qsz_across_time} after running the optimized shapers DPS ($\mbs{C}^*$), PST ($\mbs{\mu}^*$) and PPS ($\mbs{\upsilon}^*$) on packet traces (synthetically generated or from IoT devices) for sufficiently large $T=100000$ time slots. We note that the empirical measure $\bar{Q}_t$ is always consistent with the expected value $\mbb{E}[Q_t]$ estimated by WHF\footnote{The red circle/black asterisk lines in Fig.~\ref{fig:tradeoff_VSCI_VSVI} measuring $\mbb{E}[Q_t]$ match the red/blue solid lines in Fig.~\ref{fig:qsz_dps_vs_vsci_s_5} measuring $\bar{Q}_t$ for PST/PST* shapers.} method, we omit the matching results due to space limit. 
	
	In what follows, we let $\epsilon\in[0.01,0.1,1,2,5]$ and compare:
	\begin{itemize}
		\item $(\epsilon,\epsilon)$-DPS vs. PST/PST* shapers for $\rho\in(\rho^{\textrm{PST}^0},1)$\footnote{Here, we consider $\epsilon_s=\epsilon_t=\epsilon$ for the DPS mechanism due to space limit. For the effect of setting different $\epsilon_s$ and $\epsilon_t$ values on the $\bar{W}_p-\rho-(\epsilon_s,\epsilon_t)$ tradeoffs, please refer to the supplementary material.},
		\item $(\epsilon,\infty)$-DPS vs. PPS/PPS* shapers for $\rho\in(\rho^{\textrm{PPS}^0},1)$,
	\end{itemize}
	in terms of their empirical privacy-overhead (i.e., $\bar{Q}_t$/$\bar{W}_p$-$\rho$-$\epsilon$) tradeoffs, with results shown in Fig.~\ref{fig:dps_vs_vsci} and Fig.~\ref{fig:dps_vs_vsvi}, respectively. We stress that the lower (smaller $\bar{Q}_t$/$\bar{W}_p$) and closer to the right (higher $\rho$) the tradeoff curves are, the less shaping overhead the mechanisms require for privacy protection. We then make the following observations,
	\begin{enumerate}
		\item The general trend of $\bar{Q}_t$-$\rho$-$\epsilon$ tradeoffs (Fig.~\ref{fig:qsz_dps_vs_vsci_s_0_01}-\ref{fig:qsz_dps_vs_vsci_s_5}) retains in the $\bar{W}_p$-$\rho$-$\epsilon$ tradeoffs (Fig.~\ref{fig:delay_dps_vs_vsci_s_0_01}-\ref{fig:delay_dps_vs_vsci_s_5}). Minimizing $\mbb{E}[Q_t]$ indeed serves as a good proxy for minimizing $\mbb{E}[W_p]$.
		\item $\bar{Q}_t$/$\bar{W}_p$ increases when $\epsilon$ decreases with $\rho$ fixed and when $\rho$ increases with $\epsilon$ fixed. This verifies the privacy-overhead tradeoff characterized in Theorem~\ref{theorem:POT}. Guaranteeing $(\epsilon,\infty)$-DP (Fig.~\ref{fig:delay_dps_vs_vsvi_s_0_01}/\ref{fig:delay_dps_vs_vsvi_s_1}/\ref{fig:delay_dps_vs_vsvi_s_5}) instead of $(\epsilon,\epsilon)$-DP (\emph{resp.} Fig.~\ref{fig:delay_dps_vs_vsci_s_0_01}/\ref{fig:delay_dps_vs_vsci_s_1}/\ref{fig:delay_dps_vs_vsci_s_5}) vastly reduces the shaping overhead.
		\item Shapers become less overhead-efficient to guarantee the same level of privacy for a mouse flow than an elephant flow. As the input traffic changes from an elephant flow to a mouse flow (i.e., the scale parameter $s$ of the Zipf PMF increases from the left to the right plots), the shaping overhead increases (i.e., the tradeoff curves get higher and closer to the left).
		\item The advantage of $(\epsilon,\epsilon)$-DPS over PST/PST* in terms of preserving shaping overhead is more evident when we're dealing with an elephant flow than a mouse flow. That the curves in Fig.~\ref{fig:qsz_dps_vs_vsci_s_0_01}/\ref{fig:delay_dps_vs_vsci_s_0_01} are further apart\footnote{Since the y-axis is in log scale, small gaps between curves actually indicate differences in orders of magnitude.} than those in Fig.~\ref{fig:qsz_dps_vs_vsci_s_5}/\ref{fig:delay_dps_vs_vsci_s_5}. Intuitively, it's easier to hide among heavier traffic than lighter traffic. 
		\item The advantage of $(\epsilon,\infty)$-DPS over PPS/PPS*, however, behaves in the opposite fashion. The relative distances between the tradeoff curves now increases from Fig.~\ref{fig:delay_dps_vs_vsvi_s_0_01} to~\ref{fig:delay_dps_vs_vsvi_s_5}). The arrival rate $\Lambda$ of the input traffic becomes less impactful on the privacy-overhead tradeoffs of the shapers that only protect packet sizes information.
	\end{enumerate}
	The last 3 observations extend the same arguments for shapers with pad-only policy in Section~\ref{sec:pad_only_pst_pps}. 
	
	\subsection{Increasing Number of IoT Devices}
	
	As observed, heavier input traffic to the shapers yields better privacy-overhead tradeoffs. We want to understand whether increased traffic from a larger number of IoT devices will have the same effect. We therefore optimize the shapers on packet size PMF from a single device (e.g., Sense Sleep monitor), and the aggregated PMFs\footnote{For example, given $\mc{A}_\textrm{Sleep}=[0, 93, 1117]$, $\lambda_\textrm{Sleep}=[0.91, 0.08, 0.01]$ and $\mc{A}_\textrm{Camera}$ $=[0, 142, 270]$, $\lambda_\textrm{Camera}=[0.85, 0.14 , 0.01]$, their merged PMF is $\mc{A}_\textrm{Sleep + Camera}=$ $[0, 93, 142, 270, 1117]$, $\lambda_\textrm{Sleep + Camera}=[0.76, 0.08, 0.14,$ $0.01, 0.01]$ and the merged arrival rate $1-0.76$ equals the sum of Sense Sleep monitor's and Nest camera's arrival rates $(1-0.91)+(1-0.85)$.} from more devices, and plot their privacy-overhead tradeoffs in Fig.~\ref{fig:merge_devices}. We can see that the tradeoff curves get lower and closer to the right from Fig.~\ref{fig:sleep} to Fig.~\ref{fig:merge}. The shapers optimized for traffic aggregated from more IoT devices require less shaping overhead than those optimized for individual traffic flows.
	
	\begin{figure}[t]
		\centering
		\begin{subfigure}{\linewidth}
			\centering
			\includegraphics[width=0.6\linewidth,height=.45\linewidth]{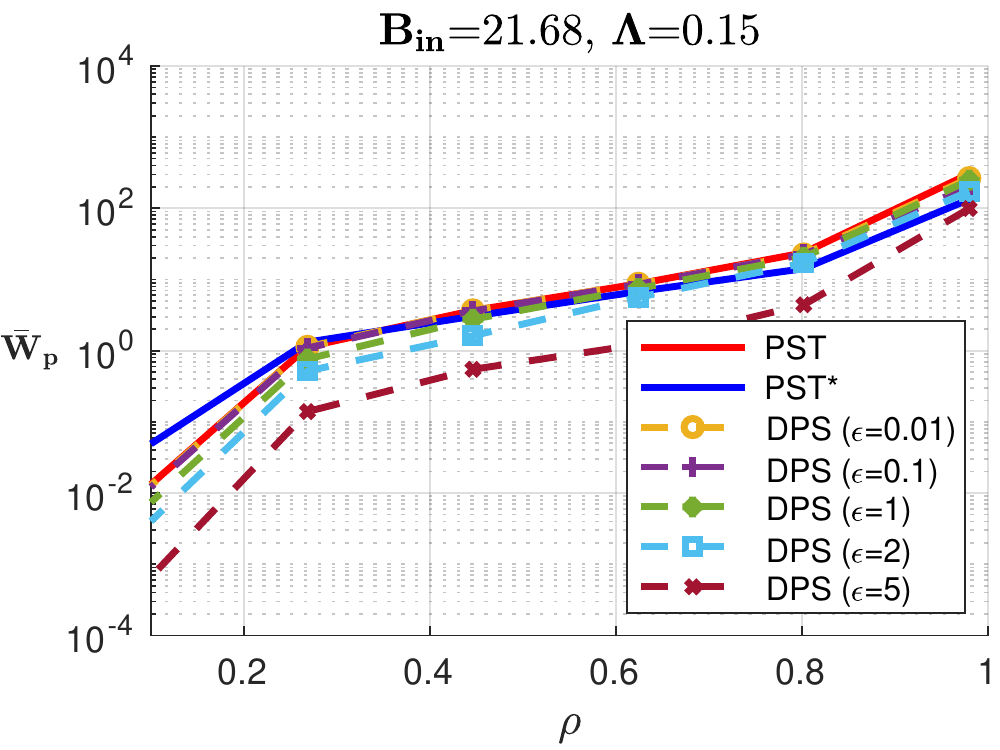}
			\caption{Performed on synthetically and i.i.d. generated packet stream.}
			\label{fig:iid}
		\end{subfigure}\\
		\begin{subfigure}{\linewidth}
			\centering
			\includegraphics[width=0.6\linewidth,height=.45\linewidth]{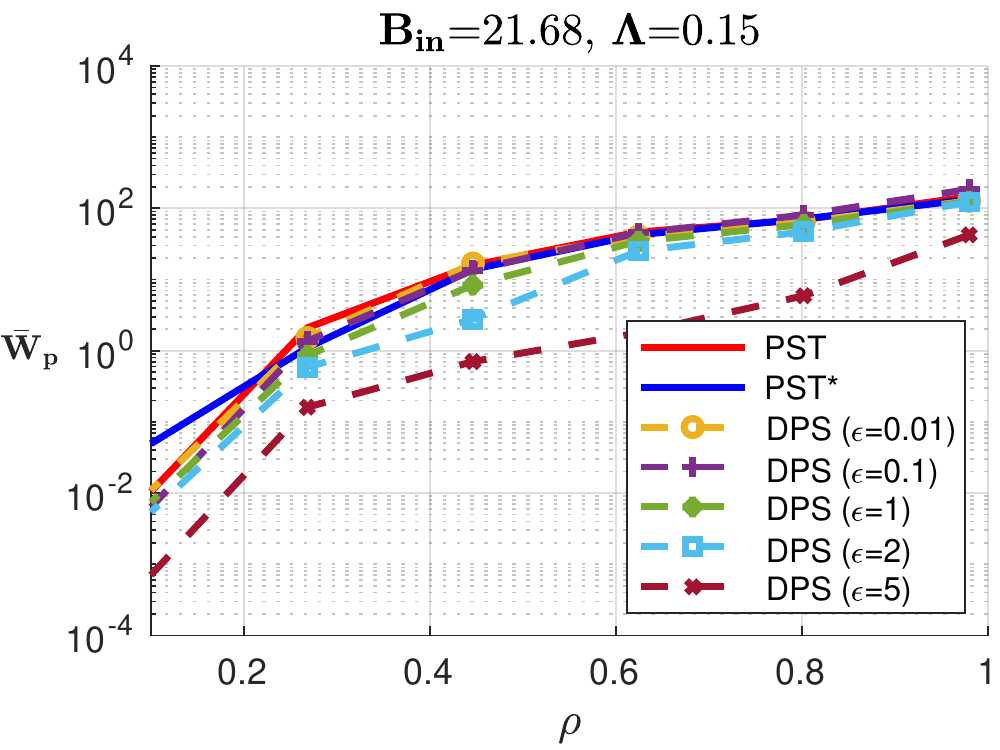}
			\caption{Performed on original bursty traffic.}
			\label{fig:bursty}
		\end{subfigure}
		\caption{$\bar{W}_p-\rho-\epsilon$ tradeoffs ($(\epsilon,\epsilon)$ DPS vs. PST/PST*) on Nest camera's packet streams (i.i.d. vs. bursty).}
		\label{fig:iid_vs_bursty}
	\end{figure}
	
	\begin{figure*}
		\centering
		\begin{subfigure}{0.245\linewidth}
			\centering
			\includegraphics[width=\linewidth,height=.7\linewidth]{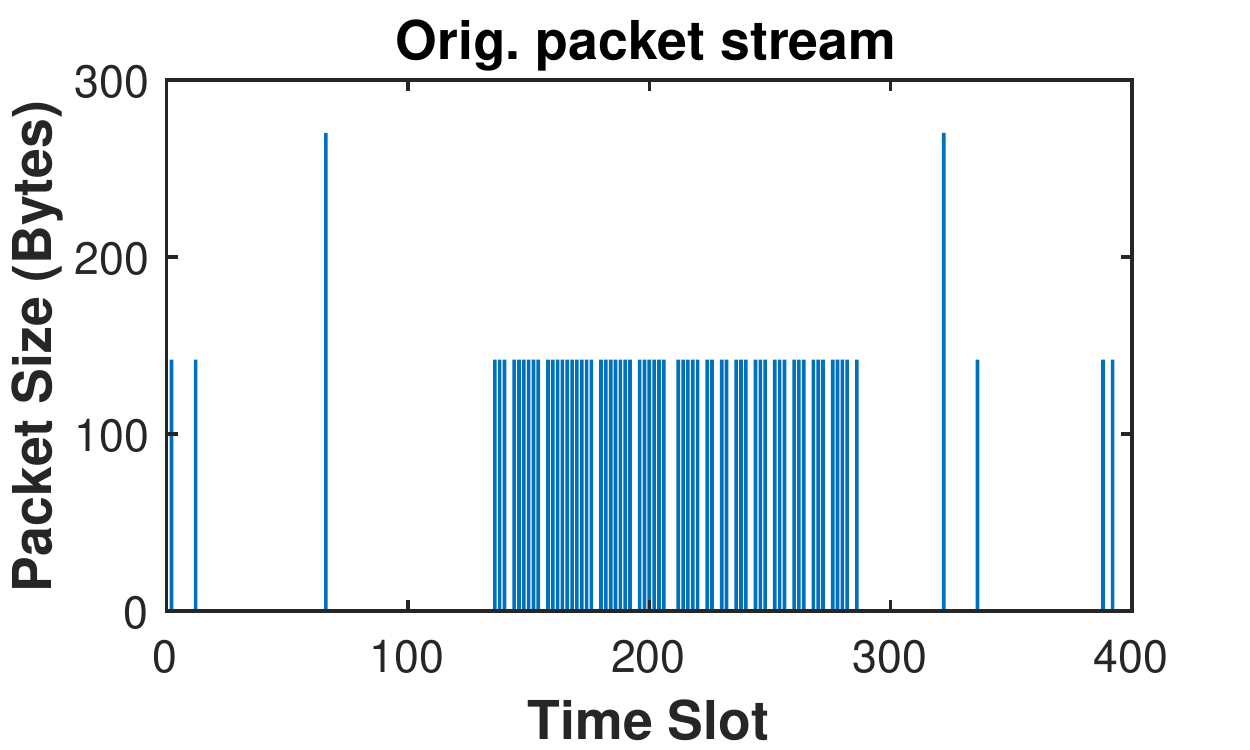}
			\caption{Original packet stream.}
			\label{fig:bursty_orig}
		\end{subfigure}
		\begin{subfigure}{0.245\linewidth}
			\centering
			\includegraphics[width=\linewidth,height=.7\linewidth]{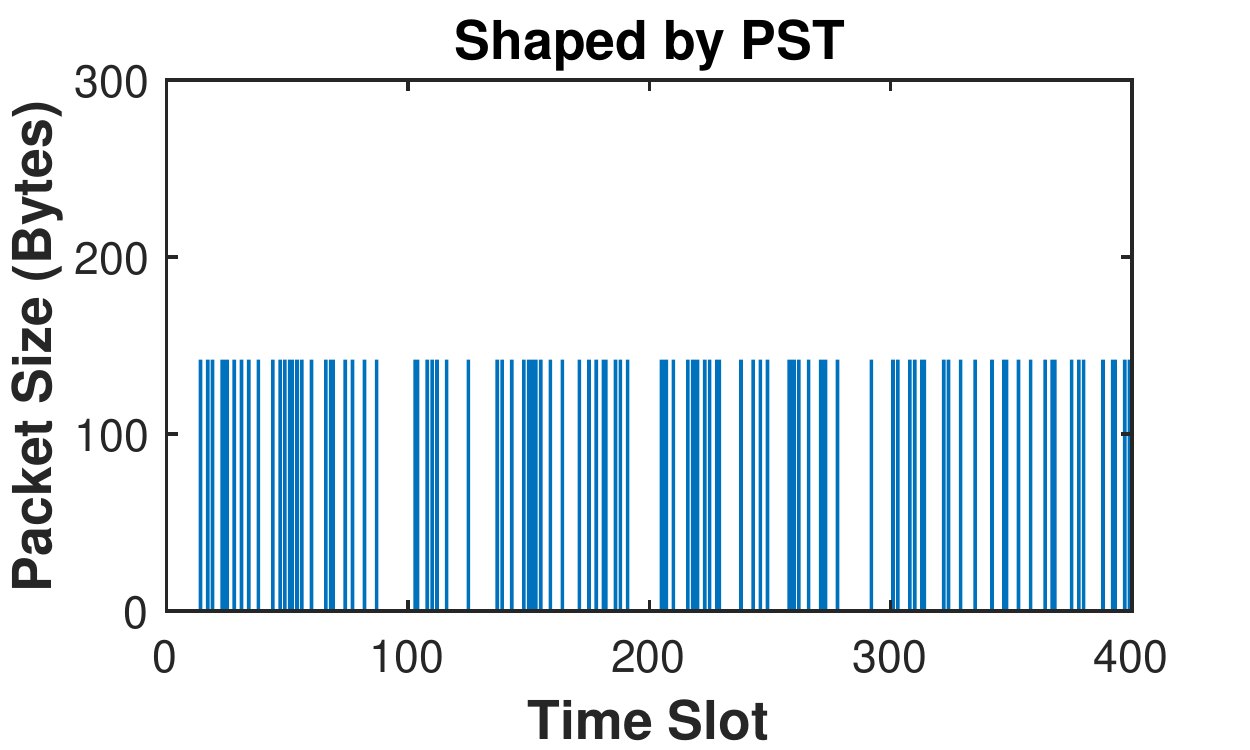}
			\caption{Shaped by PST.}
			\label{fig:bursty_vsci}
		\end{subfigure}
		\begin{subfigure}{0.245\linewidth}
			\centering
			\includegraphics[width=\linewidth,height=.7\linewidth]{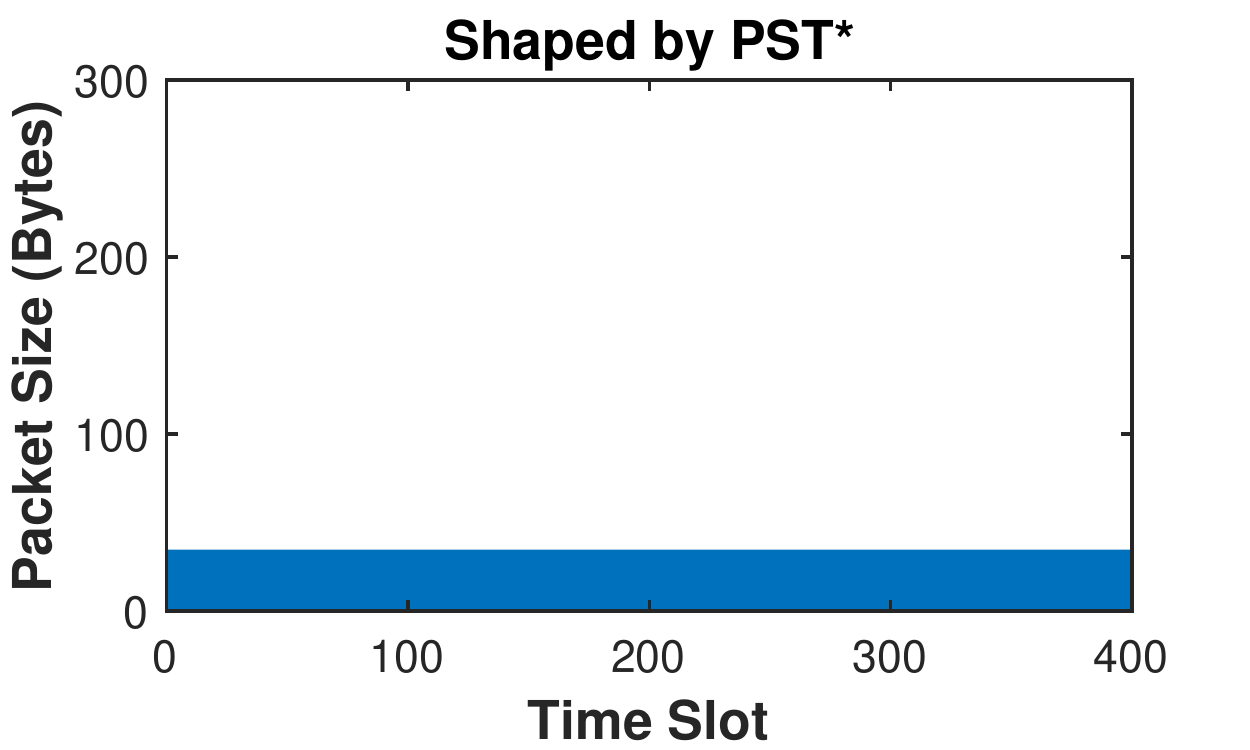}
			\caption{Shaped by PST*.}
			\label{fig:bursty_csci}
		\end{subfigure}
		\begin{subfigure}{0.245\linewidth}
			\centering
			\includegraphics[width=\linewidth,height=.7\linewidth]{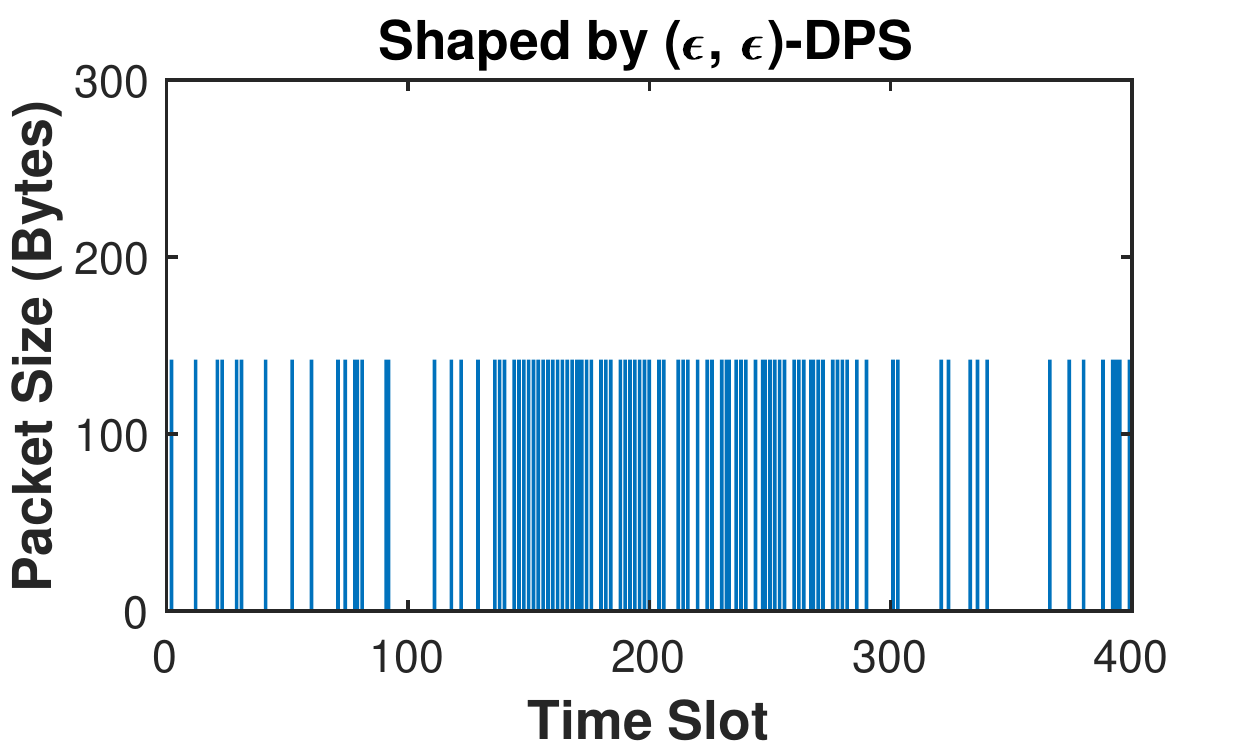}
			\caption{Shaped by $(\epsilon,\epsilon)$-DPS.}
			\label{fig:bursty_dps}
		\end{subfigure}
		\caption{Comparison between the original event packet stream from Nest camera and shaped traffic by PST, PST*, and $(\epsilon,\epsilon)$-DPS mechanisms. In particular, we set $\rho=0.62$ for all the shapers and $\epsilon=5$ for the DPS mechanism.}
		\label{fig:pkt_trace_before_n_after}
	\end{figure*}
	
	\subsection{Performance on Bursty Traffic} \label{sec:exp_corr}
	When we shape real IoT traffic that is often unpredictable or bursty using shapers optimized for a different source distribution, a critical aspect for evaluating the effect of shaping is to see how the privacy-overhead tradeoff changes. We discussed in Section~\ref{sec:DP_ADV} that shapers designed with DP maintain the same privacy guarantees despite the change in source distributions (e.g., from i.i.d. to bursty). Yet it remains to see how such change affects the amount of shaping overhead.
	
	To this end, we solve for the optimal DPS and PST/PST* mechanisms based on Nest camera's packet size PMF\footnote{We omit results from the other 2 IoT devices and PPS/PPS* shapers due to their similarities in the change of privacy-overhead tradeoffs.}. We then run 2 packet streams through the optimized shapers: one is synthetically generated by i.i.d. draws from the PMF and the other one is the original bursty traffic. We plot the corresponding privacy-overhead tradeoffs in Fig.~\ref{fig:iid} and~\ref{fig:bursty}, respectively. We see that traffic burstiness doesn't hurt the delay overhead by much when the shapers have enough dummy traffic at disposal ($\rho<0.25$), but creates longer backlogs in the queue for all the shapers with limited dummy traffic ($\rho>0.25$). In the latter region, however, the DPS mechanism introduces less additional delay overhead than PST/PST* shapers.

	\subsection{Event Packet Streams Before/After Traffic Shaping}
	\label{sec:before_after}
	
	In Fig.~\ref{fig:pkt_trace_before_n_after}, we visualize the original event packet stream of Nest camera against its shaped traffic in the same window of 200 time slots by running it through the PST, PST* and $(\epsilon,\epsilon)$-DP shapers, respectively. We see from Fig.~\ref{fig:bursty_orig} that in time slots $\sim$70 and $\sim$320, there are two 270B packets corresponding to motion detection, and from time slot $\sim$130 to 280, there's a long burst of 142B packets indicating a time period when the user is checking the camera feed. From the output of PST/PST* shapers in Fig.~\ref{fig:bursty_vsci} and~\ref{fig:bursty_csci}, we couldn't detect the presence of such event-indicating traffic. In Fig.~\ref{fig:bursty_dps}, the output of the $(\epsilon,\epsilon)$-DP shaper with $\epsilon=5$ obscures the events of motion detection, though reveals the period of checking camera feed to some extent. This highlights the limitation of the event-level DP model on infinite packet streams.
	
	\section{Limitations and Future Directions} \label{sec:L&F}
	One way to overcome the limitation of the event-level DP model in hiding long bursts of event packets is to extend the definition of event packet stream. That is, we can expand from single-packet events to burst-of-packets events by buffering the bursts and merge the enclosed packets, and the length of an indivisible time slot can be specified as the maximum duration of event bursts. We can then optimize and run a DP shaper on such redefined event burst stream. This is like running the clock for the gateway output at a slower cycle than the network behind it -- the shaper generates packets at a slower rate (but with bigger payloads), imposing a fixed amount of delay (the cycle length) at baseline. A downside of this approach is that the cycle length as well as the onset times of event bursts are subject to the user's real-time interactions with IoT devices and they are oftentimes impossible to know in the design phase.
	
	Another interesting extension of the event-level privacy is to look at the $w$-event privacy model~\cite{kellaris2014differentially} over infinite packet streams, which protects any \emph{event
		sequence} occurring in a window of $w$ consecutive time slots. It can protect any temporally constrained user activities from being disclosed, just as the period of checking camera feed. This approach is rather different from the extension to event bursts since the $w$-event privacy model protects any event burst (no longer than length $w$) irrespective of its onset time.
	
	In fact, by sequential composition~\cite{mcsherry2009privacy}, an $(\epsilon, \epsilon)$-DP shaper can be shown to guarantee $w$-event $w\cdot\epsilon$-DP, yet the privacy leakage grows linearly with the length of the time window $w$. One way to address this is to design a shaping mechanism \emph{with memory}: $P(D_t|A^t,D^{t-1})=P(D_t|A^{[t-w+1:t]},D^{[t-w+1:t-1]})$, so that sublinear privacy leakage is achievable by carefully chosen departures accounted for the dependency across time windows of length $w$. Yet this is challenging due to a much higher dimensional optimization space.
	
	Hiding long bursts of events falls under the broader subject of dealing with correlated traffic. The optimization of our proposed DPS mechanism relies on the assumption that $A^T$ is i.i.d. across time to be efficiently solved by convex programming and the WHF method. However, correlation in the input traffic can potentially be modeled and utilized to further improve the overhead efficiency of shaper design. For example, if both the system designer and the adversary know that a user sleeps strictly between 8pm-11pm, thus triggering an event packet stream with 1117B packet (generated by Sense Sleep monitor) only observable between 8pm-11pm, then a time-constrained shaper can be designed to only shape the traffic between 8pm-11pm for reduced overhead.
	
	The time window exemplifies the \emph{constraint specification} in the framework of Blowfish privacy~\cite{he2014blowfish} to restrict the set of realizable $A^T$ due to correlation. Comparably, $A^T$ can also be drawn from a \emph{distribution class} (e.g., Markov chains) following Pufferfish privacy~\cite{song2017pufferfish} for which more overhead-efficient shapers can possibly be designed. However, these privacy models will be hard to use in practice. On one hand, it would be very device/user dependent and then the privacy and overhead guarantees would be contingent on the device/user behaving \emph{typically}. On the other hand, estimating the correlation from traffic data and optimizing shapers may become computationally expensive and even intractable.
	
	Another assumption that our proposed shaping mechanism relies on is an intermediate platform trusted and shared by many households to reverse the packet ``surgery''. To relax this assumption, we can design and deploy the DPS mechanism on a device level: it can be implemented to shape the outgoing network traffic of individual IoT devices. We can think of each device with its own dedicated FCFS queue and delay-optimal DP shaper. Designing a traffic shaping system in this way motivates the study of optimal allocation of privacy budget and network resources, as well as optimal scheduling of device outputs in a local area IoT network.
	
	\section{Conclusion} \label{sec:CON}
	In this work, we motivate the need for designing network traffic shaping in IoT networks under the framework of DP. We establish a rigorous event-level DP model on discrete event packet streams and propose an event-level $(\epsilon_s,\epsilon_t)$-DP shaping mechanism which utilizes a discrete memoryless $\max\left(\epsilon_s, \frac{\epsilon_t}{2}\right)$-LDP channel $c$ to protect both packet sizes and timing information from traffic analysis attacks. Under special settings of $(\epsilon_s,\epsilon_t)$ and deterministic/pad-only policies, the DPS mechanism becomes equivalent to previous shaping schemes proposed in other contexts. All shapers work by generating the output packet stream in ways that are either independent or dependent from the input traffic. The dependency introduced by the channel $c$ offers the DPS mechanism more degrees of freedom in trading off privacy for shaping overhead. 
	
	We empirically evaluate all shapers on synthetic data and packet traces from actual IoT devices. Under various types of input traffic, we discover interesting fundamental privacy-overhead tradeoffs: increased traffic from a larger number of IoT devices makes user privacy protection easier. The DPS mechanism not only enhances the privacy-overhead tradeoffs of the PST/PST* and PPS/PPS* mechanisms, but also handles bursty traffic better. This novel prototype for building a privacy-preserving and overhead-efficient traffic shaping system enables users to adapt to their privacy demands and network conditions. It serves as a foundation for understanding and defending against more sophisticated traffic analysis attacks with strong, formal and tunable privacy guarantee.
	
	
	%
	
	\appendix
	\subsection{Proof of Proposition~\ref{prop:DPS}} \label{appdix:DPS}
	\begin{proof}
		The set of constraints in~\eqref{eqn:priv_constr_alt} is equivalent to 
		\begin{align}
			\left\{
			\begin{array}{ll}
				\max\limits_{\substack{\forall i,k\in N^+\\ \forall j\in M}}\left\{\frac{c_{ij}}{c_{kj}}\right\} \leq e^{\epsilon_s}, \\
				\max\limits_{\substack{\forall i,k\in N^+\\ \forall j\in M}}\left\{\frac{c_{ij}}{c_{0j}}, \frac{c_{0j}}{c_{kj}}\right\} \leq e^\frac{\epsilon_t}{2}. 
			\end{array}\right.
			\Leftrightarrow
			\max\limits_{\substack{\forall i,k\in N\\ \forall j\in M}}\left\{\frac{c_{ij}}{c_{kj}}\right\}\leq e^{\max\left(\epsilon_s,\frac{\epsilon_t}{2}\right)}. \nonumber
		\end{align}
		According to Definition~\ref{defn:LDP}, the channel $c$ satisfying~\eqref{eqn:priv_constr_alt} is $\max\left(\epsilon_s, \frac{\epsilon_t}{2}\right)$-LDP.
		
		We then show the DP guarantee of the shaping mechanism $\mc{M}^\textrm{DPS}_t:\mc{A}\xrightarrow{c}\mc{D}$. Since the privacy model in Definition~\ref{defn:dp_ind} is defined on event-level adjacent packet stream prefixes, we look at the different adjacency pairs in Fig.~\ref{fig:adjacency}. For the event-level packet-size adjacency in Fig.~\ref{fig:adjacency_type}, $A^T$ and $\tilde{A}^T$ differ in a single time slot $t$ where $A_t\neq\tilde{A}_t>0$ and $A_s=\tilde{A}_s,\forall s\neq t\in [T]$, so we have $\epsilon_\textrm{DP}(\mc{M}^\textrm{DPS})$ as,
		\begin{align}
			& \left|\sum_{s=1, s\neq t}^T\log\frac{P[\mc{M}^\textrm{DPS}_s(A_s)=d_s]}{P[\mc{M}^\textrm{DPS}_s(\tilde{A}_s)=d_s]}+\log\frac{P[\mc{M}^\textrm{DPS}_t(A_t)=d_t]}{P[\mc{M}^\textrm{DPS}_t(\tilde{A}_t)=d_t]}\right|
			\nonumber\\=
			& \left|\log\frac{P[\mc{M}^\textrm{DPS}_t(A_t)=d_t]}{P[\mc{M}^\textrm{DPS}_t(\tilde{A}_t)=d_t]} \right| = \left|\log\frac{c(d_t|A_t>0)}{c(d_t|\tilde{A}_t>0)}\right|
			\nonumber\\\leq
			& \left|\log\max_{\forall i,k\in N^+, \forall j\in M}\frac{c(d_j|a_i)}{c(d_j|a_k)}\right| \leq \left|\log e^{\epsilon_s}\right| = \epsilon_s. \label{eqn:eps_s}
		\end{align}
		Likewise, for event-level packet-timing adjacent $A^T$ and $\tilde{A}^T$ in Fig.~\ref{fig:adjacency_timing} differing in 2 different time slots $s$ and $t$, we have $A_t=\tilde{A}_s>0$ and $\tilde{A}_t=A_s=0$. Then $\epsilon_\textrm{DP}(\mc{M}^\textrm{DPS})$ becomes,
		\begin{align}
			& \left|\log\frac{c(d_t|A_t>0)}{c(d_t|\tilde{A}_t=0)}+\log\frac{c(d_s|A_s=0)}{c(d_s|\tilde{A}_s>0)}\right| \nonumber \\
			\leq & \left|\log\max_{\forall i\in N^+, \forall j\in M}\frac{c(d_j|a_i)}{c(d_j|a_0)}\right| + \left|\log\max_{\forall k\in N^+, \forall j\in M}\frac{c(d_j|a_0)}{c(d_j|a_k)}\right| \nonumber \\
			\leq & 2\left|\log e^{\epsilon_t/2}\right| = \epsilon_t. \label{eqn:eps_t}
		\end{align}
		We have the first lines in~\eqref{eqn:eps_s} and~\eqref{eqn:eps_t} from the memoryless property~\eqref{eqn:dp_memoryless} and the last inequalities from the set of constraints~\eqref{eqn:priv_constr_alt}. By Definition \ref{defn:dp_ind} and \ref{defn:dp_all},  the shaping mechanism $\mc{M}^\textrm{DPS}_t:\mc{A}\xrightarrow{c}\mc{D}$ with a $\max\left(\epsilon_s, \frac{\epsilon_t}{2}\right)$-LDP memoryless channel $c$ satisfying~\eqref{eqn:priv_constr_alt} guarantees event-level $(\epsilon_s, \epsilon_t)$-DP. 
	\end{proof}
	
	\subsection{Proof of Proposition~\ref{prop:PST}} \label{appdix:PST}
	\begin{proof}
		For event-level packet-size or packet-timing adjacent prefixes $A^T$ and $\tilde{A}^T$, and any realization of the mechanism output $\mbf{d}^T\in\mc{D}^T$, we have $\epsilon_\textrm{DP}(\mc{M}^\textrm{PST})=\left|\sum_{t=1}^T\log\frac{\mu(d_t)}{\mu(d_t)}\right|=0$. By Definition \ref{defn:dp_ind}, PST guarantees perfect event-level privacy, or $(0,0)$-DP.
	\end{proof}
	
	\subsection{Proof of Proposition~\ref{prop:PPS}} \label{appdix:PPS}
	\begin{proof}
		For two event-level packet-size adjacent prefixes $A^T$ and $\tilde{A}^T$ according to Definition~\ref{defn:adjacency}, we have $\epsilon_\textrm{DP}(\mc{M}^\textrm{PST})=\abs{\sum_{t\in\mc{I}}\log \frac{\upsilon(d_t)}{\upsilon(d_t)}+\sum_{t\in \mc{T}\setminus\mc{I}}\log \frac{1}{1}}=0$. Hence the PPS mechanism guarantees perfect event-level packet-size privacy, or $(0,\infty)$-DP by Definition \ref{defn:dp_ind}.
	\end{proof}
	
	\subsection{Proof of Proposition~\ref{prop:CVX}} \label{appdix:CVX}
	\begin{proof}
		All the constraints in~\eqref{opt:DPS} are affine. To reason about the convexity of the objective $\mbb{E}[Q_t]$ as a function of $\mbs{C}$, we see that $X_t=A_t-D_t, t=1,2,\ldots$ are a sequence of i.i.d. random variables parameterized by the same stochastic mapping $\mbs{C}$. As a result, $\{X_t(\mbs{C})\}$ satisfies \emph{strong stochastic convexity} (SSCX) in $\mbs{C}$~\cite{shanthikumar1991strong}. The recursive Lindley's equation~\eqref{eqn:lindley_qsz2} involves only the $\max(x,0)$ function and the $+$ operator, both are increasing and convex functions. By the preservation of convexity~\cite{boyd2004convex}, $Q_t$ is an increasing and convex function of $(X_1, X_2,\ldots, X_t)$ parameterized by $\mbs{C}$, hence $\{Q_t(\mbs{C})\}$ satisfies SSCX in $\mbs{C}$ as well. SSCX means that for any increasing and convex function $f: \mbb{R}\rightarrow\mbb{R}$, including the identity function $f(x)=x$, $\mbb{E}[f(Q_t(\mbs{C}))]$ is convex in $\mbs{C}$. Therefore, $\mbb{E}[Q_t]$ is a convex function of $\mbs{C}$.
		
		The same argument applies to~\eqref{opt:PST} and~\eqref{opt:PPS}. As PST ($\mbs{\mu}$) and PPS ($\mbs{\upsilon}$) shapers are special cases of the DPS mechanism ($\mbs{C}$), $\mbb{E}[Q_t]$ is also convex in $\mbs{\mu}$ or $\mbs{\upsilon}$. The constraints on $\mbs{\mu}$ and $\mbs{\upsilon}$ are linear as well. Therefore, the optimization problems~\eqref{opt:DPS},~\eqref{opt:PST} and~\eqref{opt:PPS} are convex programs.
	\end{proof}
	
	\subsection{Proof of Theorem~\ref{theorem:POT}} \label{appdix:POT}
	\begin{proof}
		Following Proposition~\ref{prop:CVX} that~\eqref{opt:DPS} is a convex program with all affine constraints, we first show strong duality based on Slater's condition~\cite[Ch.~5.2.3]{boyd2004convex} that there always exists a feasible point. We then perturb $\rho$ and $(\epsilon_s,\epsilon_t)$ in the original problem and argue about the changes in the optimal value $\mbb{E}^*[Q_t]$ by sensitivity analysis~\cite[Ch.~5.6]{boyd2004convex}. 
		
		For $\epsilon\geq 0$, any rank-one channel matrix $\mbs{C}=\mbf{1}\cdot\mbs{\mu}^\top$ with arbitrary probability vector $\mbs{\mu}$ satisfies the privacy and right stochastic constraints. Then the first 2 constraints in~\eqref{opt:DPS} reduce to,
		\begin{align}
			\mbs{\lambda}^\top\mbs{a} < \mbs{\mu}^\top\mbs{d} \leq \mbs{\lambda}^\top\mbs{a} / \rho.
		\end{align}
		It's easy to see that we can always find a probability vector $\mbs{\mu}$ such that $\mbs{\mu}^\top\mbs{d}=\mbs{\lambda}^\top\mbs{a} / \rho\in(\mbs{\lambda}^\top\mbs{a}, d_n]$ for $\rho\in[\mbs{\lambda}^\top\mbs{a}/d_n, 1)$ with $d_n\geq a_n$ and $d_0=0$. Therefore, feasibility hence strong duality holds for the convex program~\eqref{opt:DPS}.
		
		To simplify the sensitivity analysis, we let $\epsilon_s,\frac{\epsilon_t}{2}=\epsilon$ w.l.o.g. in the set of constraints~\eqref{eqn:priv_constr_alt} which reduces to 
		\begin{align}
			c_{ij}-e^\epsilon c_{kj}\leq 0,\ \forall (i,k,j)\in N^2\times M.
		\end{align}
		The other constraints in~\ref{opt:DPS} not involving $\epsilon$ and $\rho$ must always hold, but in the sequel we do not make them explicit wherever applicable. 
		
		Let $g(\alpha,\mbs{\beta})$ be the Lagrange dual function and $\mc{F}$ be the feasible set of the original unperturbed problem~\eqref{opt:DPS},
		\begin{align}
			g(\alpha,\mbs{\beta}) 
			= \inf_{\mbs{C}\in\mc{F}} 
			\Bigg( 
			&   \mbb{E}_{\mbs{C}}[Q_t] 
			+ \alpha \left(
			\mbs{\lambda}^\top\mbs{C}\mbs{d}\cdot\rho - \mbs{\lambda}^\top\mbs{a}
			\right) \nonumber \\
			&   + \sum_{r=1}^{R_\epsilon}\beta_r(c_{ij}-e^\epsilon c_{kj})
			\Bigg), \label{eqn:dual_func}
		\end{align}
		where $R_\epsilon=|N^2\times M|$ and w.l.o.g. we omit the terms corresponding to the other constraint functions. Denote $(\alpha^*,\mbs{\beta}^*)\succeq0$ as the optimal dual variables. Let $p^*(\Delta\rho, \Delta\epsilon)$ be the optimal value of the perturbed problem,
		\begin{equation}
			\begin{aligned} 
				\min_{\mbs{C}}\quad & \mbb{E}_{\mbs{C}}[Q_t] \\
				\subjto
				\quad & \mbs{\lambda}^\top\mbs{C}\mbs{d}\cdot(\rho+\Delta\rho) \leq \mbs{\lambda}^\top\mbs{a}, \\
				& c_{ij}-e^{\epsilon+\Delta\epsilon}c_{kj}\leq 0,\ \forall (i,k,j)\in N^2\times M.
			\end{aligned}
			\tag{$\tilde{\mc{P}}^\textrm{DPS}$}\label{opt:DPS_perturbed}
		\end{equation}
		Then we have, by strong duality, for any feasible point $\tilde{\mbs{C}}$ in the perturbed problem~\eqref{opt:DPS_perturbed},
		\begin{align}
			p^*(0,0) 
			&   = g(\alpha^*,\mbs{\beta}^*)
			\leq \mbb{E}_{\tilde{\mbs{C}}}[Q_t] - \alpha^* \mbs{\lambda}^\top\tilde{\mbs{C}}\mbs{d} \cdot \Delta\rho \label{eqn:delta_rho}\\
			&   + \sum_{r=1}^{R_\epsilon}\beta^*_r
			\left(
			\left(
			\tilde{c}_{ij}-e^{\epsilon+\Delta\epsilon}\tilde{c}_{kj}
			\right) 
			+ e^\epsilon \tilde{c}_{kj} \left(e^{\Delta \epsilon}-1\right)
			\right), \nonumber
		\end{align}
		rearrange the terms and further simplify,
		\begin{align}
			\mbb{E}_{\tilde{\mbs{C}}}[Q_t] \geq p^*(0,0) 
			- \sum_{r=1}^{R_\epsilon}\beta^*_r e^\epsilon \left(e^{\Delta \epsilon}-1\right) 
			+ \alpha^*\mbs{\lambda}^\top\mbs{a} \cdot \Delta\rho. \label{eqn:sensitivity_analysis}
		\end{align}
		The inequality in~\eqref{eqn:delta_rho} follows from~\eqref{eqn:dual_func} and the constraint $\mbs{\lambda}^\top\tilde{\mbs{C}}\mbs{d}\cdot(\rho+\Delta\rho) \leq \mbs{\lambda}^\top\mbs{a}$ for feasible $\tilde{\mbs{C}}$. Likewise, the simplification to the inequality~\eqref{eqn:sensitivity_analysis} results from the constraint $\mbs{\lambda}^\top\tilde{\mbs{C}}\mbs{d}\geq\mbs{\lambda}^\top\mbs{a}$ and $\tilde{c}_{kj}\leq 1$.
		
		In the perturbed problem,~\eqref{eqn:sensitivity_analysis} holds for any feasible $\tilde{\mbs{C}}$, so does the optimal $\tilde{\mbs{C}}^*$. Meanwhile, $p^*(0,0)$ is the optimal value of the original unperturbed problem. Thereby,
		\begin{itemize}
			\item If we increase the privacy guarantee by decreasing $\epsilon$ ($\Delta\epsilon<0$) with $\rho$ fixed, then the minimum expected queue size across time $\mbb{E}^*[Q_t]$ increases.
			\item If we increase the transmission efficiency level $\rho$ ($\Delta\rho>0$) with $\epsilon$ fixed, then $\mbb{E}^*[Q_t]$ increases as well.
		\end{itemize}
	\end{proof}
	One may want to show the convexity of $p^*(\Delta\rho,\Delta\epsilon)$ in $\Delta\rho$ or $\Delta\epsilon$ following the standard procedure in sensitivity analysis. This is false, however, as the support of $p^*(\Delta\rho,\Delta\epsilon)$ is not convex in either $\Delta\rho$ or $\Delta\epsilon$. We omit the proof due to space limit. The intuition is that when we convert~\eqref{opt:DPS_perturbed} to its standard form, $\Delta\rho$ and $\Delta\epsilon$ do not appear on the right hand side of the inequality constraints.
	
	\section*{Acknowledgment}
	Special thanks to Noah Apthorpe, Dillon Reisman and Nick Feamster for sharing the network traffic data of smart home IoT devices.
	
	\ifCLASSOPTIONcaptionsoff
	\newpage
	\fi

	
	
	%
		
		
	
	\bibliographystyle{IEEEtran}
	\bibliography{ToN}

	\vfill
	

\end{document}